\documentclass{article}

\usepackage{adjustbox}
\usepackage[ruled,vlined,noend]{algorithm2e}
\usepackage{amsfonts,amsmath,amssymb,amsthm}
\usepackage{authblk}
\usepackage{bbm}
\usepackage{caption}
\usepackage{float}
\usepackage[edges]{forest}
\usepackage{geometry}
\usepackage[colorlinks = true,
            linkcolor = blue,
            urlcolor = blue,
            citecolor = teal,
            anchorcolor = blue, 
            breaklinks = true]{hyperref}
\usepackage{listings}
\usepackage{mathtools}
\mathtoolsset{showonlyrefs=false}
\usepackage{natbib}
\usepackage{pgfplots}
\usepackage{physics}
\usepackage{standalone}
\usepackage{subcaption}
\usepackage{tikz}
\usepackage{xcolor}

\usepgfplotslibrary{groupplots}
\pgfplotsset{compat=1.14}

\newcommand*{\bbE}{\mathbb{E}}  %
\newcommand*{\bbM}{\mathbb{M}}  %
\newcommand*{\bbP}{\mathbb{P}}  %
\newcommand*{\bbQ}{\mathbb{Q}}  %
\newcommand*{\bbR}{\mathbb{R}}  %

\newcommand*{\Chi}{\mathcal{X}}  %
\newcommand*{\calY}{\mathcal{Y}}  %

\newtheorem{corollary}{Corollary}[section]
\newtheorem{definition}{Definition}[section]

\newtheorem{proposition}{Proposition}[section]
\newtheorem{remark}{Remark}[section]

\SetKwFor{ParFor}{for}{in parallel do}{end}
\SetKwBlock{Locals}{Local Variables:}{end}
\SetKwProg{Fn}{Function}{}{}
\DontPrintSemicolon

\DeclarePairedDelimiter\floor{\lfloor}{\rfloor}
\DeclarePairedDelimiter\ceil{\lceil}{\rceil}

\newcommand{\ft}[3]{#1_{#2:#3}}  %

\newcommand{\MC}{Monte Carlo}

\newcommand{\bigO}{\mathcal{O}}

\usetikzlibrary{
    arrows.meta,
    backgrounds,
    calc,
    fit,
    positioning}

\tikzstyle{state}=[circle,
                   thick,
                   draw=blue!80,
                   fill=blue!20]

\tikzstyle{operation}=[rectangle,
                       thick,
                       draw=gray!80,
                       fill=gray!20]

\tikzstyle{background}=[rectangle,
                        fill=gray!0,
                        inner sep=0.2cm,
                        rounded corners=5mm]
                        
\definecolor{color0}{HTML}{332288}
\definecolor{color1}{HTML}{88CCEE}
\definecolor{color2}{HTML}{44AA99}
\definecolor{color3}{HTML}{117733}
\definecolor{color4}{HTML}{999933}
\definecolor{color5}{HTML}{DDCC77}

\title{De-Sequentialized Monte Carlo: a parallel-in-time particle smoother}
\author[1]{Adrien Corenflos}
\author[2]{Nicolas Chopin}
\author[1]{Simo S\"{a}rkk\"{a}}
\affil[1]{Department of Electrical Engineering and Automation, Aalto University.}
\affil[2]{ENSAE, Institut Polytechnique de Paris.}

\date{\today}

\begin{document}
\maketitle

\begin{abstract}
    Particle smoothers are SMC (Sequential Monte Carlo) algorithms designed to
    approximate the joint distribution of the states given observations from a
    state-space model. 
    We propose dSMC (de-Sequentialized Monte Carlo), a new particle smoother that is
    able to process $T$ observations in $\bigO(\log T)$ time on parallel
    architecture. This compares favourably with standard particle smoothers,
    the complexity of which is linear in $T$. We derive $\mathcal{L}_p$ convergence
    results for dSMC, with an explicit upper bound, polynomial in $T$. 
    We then discuss how to reduce the variance of the smoothing estimates computed by dSMC by (i) 
    designing good proposal distributions for sampling the particles at the initialization of the algorithm, as well as by (ii) using 
    lazy resampling to increase the number of particles used in dSMC. 
    Finally, we design a particle Gibbs sampler based on dSMC, which is able to perform parameter inference in a state-space
    model at a $\bigO(\log(T))$ cost on parallel hardware.
\end{abstract}

{\bf Keywords:} Sequential Monte Carlo; Parallel methods; Particle filtering; Particle Gibbs; Particle smoothing

\section{Introduction}
\label{sec:intro}

State space models (SSM), or hidden Markov models, are a class of statistical models that comprise unobserved (latent) Markovian states $X_t \in \Chi{}$ for $t \in \{0, 1, \ldots, T\}$, and conditionally independent observations $Y_t \in \calY{}$ \citep[see, e.g., ][]{Jazwinski:1970,Cappe+Moulines:2005,Sarkka2013Book,Chopin2020Book}. The models can be written in the form
\begin{align}
  X_t \mid x_{t-1} &\sim P_t(\dd{x_t} \mid x_{t-1}), \\
  Y_t \mid x_{t} &\sim P_t(\dd{y_t} \mid x_{t}),
\label{eq:ssm1}
\end{align}
with $X_0 \sim \bbP_0(\dd{x_0})$, where $P_t(\dd{x_t} \mid x_{t-1})$ is the transition kernel of the Markov sequence $X_t$ modeling the dynamics of the system, $P_t(\dd{y_t} \mid x_{t})$ is the conditional distribution of measurements $Y_t$, and $\bbP_0(\dd{x_0})$ is the prior distribution of the initial state $X_0$. For simplicity, we assume that there exist $h_t$ such that $P_t(\dd{y_t} \mid x_{t}) = h_t(y_t \mid x_t) \dd{y_t}$, where $\dd{y_t}$ refers to a dominating measure over $\calY{}$, for example, the Lebesgue measure if $\calY{}=\bbR^{d_y}$. 

In this paper, we consider the state-estimation problem which refers to the problem of inferring the states $X_t$ from the measurements $Y_t$. In particular, we concentrate on the smoothing problem \citep{Wiener:1949,Sarkka2013Book,Chopin2020Book}, where the aim is to infer the distribution of the whole trajectory of states $\ft{X}{0}{T}$ given the whole trajectory of measurements $\ft{Y}{0}{T}$. A typical application of state-space estimation consists in target tracking, where the state $X_t$ models the position (and possibly other physical quantities such as the speed) of a moving target, and the observation $Y_t$ corresponds to some noisy and partial or indirect measurement of $X_t$ \citep{Jazwinski:1970,Bar-Shalom+Li+Kirubarajan:2001}. Additionally, we consider the parameter-estimation problem of inferring the unknown parameters appearing in the model. In addition to target tracking, SSMs state and/or parameter estimation problems also arise in various other applications such as in biomedicine, epidemiology, finance, audio signal analysis, and imaging~\citep[for a review, see, e.g.][]{Sarkka2013Book,Chopin2020Book}.

In the signal processing setting \citep[e.g.][]{Jazwinski:1970,Sarkka2013Book} the solutions to the smoothing problem are focused on computing the marginal conditional distributions of $X_t$ for $t = 0,\ldots,T$ given all the measurements $\ft{Y}{0}{T}$. However, in the context of Monte Carlo methods \citep{Chopin2020Book} -- which we also concentrate on here -- it is more natural to directly consider the joint distribution of all the states and measurements which can be written as
\begin{align}\label{eq:ssm}
    \bbP(\dd{\ft{x}{0}{T}}, \dd{\ft{y}{0}{T}})
        \coloneqq \bbP_0(\dd{x_0}) \left\{\prod_{t=0}^T h_t(y_t \mid x_t) \dd{y_t}\right\} \left\{\prod_{t=1}^T P_t(\dd{x_t} \mid x_{t-1}) \right\}.
\end{align}
In this notation, smoothing consists in representing the posterior distribution of the states conditionally on the observations $\bbQ_T(\dd{\ft{x}{0}{T}}) \coloneqq \bbP(\dd{\ft{x}{0}{T}} \mid \ft{y}{0}{T})$ and, in particular, being able to approximate expectations such as $\bbQ_T(\varphi) \coloneqq \bbE_{\bbQ_T}\left[\varphi(\ft{X}{0}{T})\right]$ for some function $\varphi$ of interest. When $\bbP_0(\dd{x_0}) = \bbP_0(\dd{x_0} \mid \theta)$, $P_t(\dd{x_t} \mid x_{t-1}) = P_t(\dd{x_t} \mid x_{t-1}, \theta)$, and $h_t(y_t \mid x_{t}) = h_t(y_t \mid x_{t}, \theta)$ depend on a parameter $\theta$, parameter estimation consists in computing estimates $\theta$ either as point estimates or in form of a posterior distribution of the parameter. Formally, if $\theta$ is given a prior distribution $p(\dd{\theta})$, one can represent its posterior distribution as
\begin{align}
  p(\dd{\theta} \mid \ft{y}{0}{T}) \propto p(\dd{\theta}) p(\ft{y}{0}{T} \mid \theta),
\end{align}
where
\begin{align}
    p(\ft{y}{0}{T} \mid \theta) = \int_{\Chi^T} \bbP_0(\dd{x_0} \mid \theta) \left\{\prod_{t=0}^T h_t(y_t \mid x_t, \theta)\right\} \left\{\prod_{t=1}^T P_t(\dd{x_t} \mid x_{t-1}, \theta) \right\}.
\end{align}

Except in the case of finite-state SSMs \citep[e.g.][]{rabiner1989tutorial}, linear Gaussian SSMs (LGSSMs) \citep{kalman1960KF,Rauch1965RTS}, and certain other special cases, neither the smoothing nor the parameter estimation problems admit a closed-form solution, and we need to resort to approximations. A successful class of such approximations comprise Gaussian approximation based filtering and smoothing approximations such as extended \citep{Jazwinski:1970}, unscented \citep{Julier+Uhlmann+Durrant-Whyte:2000}, and cubature Kalman filters \citep{Ito+Xiong:2000,Arasaratnam+Haykin:2009}, as well as their corresponding smoothers \citep[for a review, see, e.g.,][]{Sarkka2013Book}. Another class of methods is sequential Monte Carlo (SMC) algorithms \citep[see, e.g,][]{Gordon+Salmon+Smith:1993,Doucet+Godsill+Andrieu:2000, Chopin2020Book} such as particle filters and smoothers which are based on Monte Carlo sampling from the filtering and smoothing distributions. These algorithms can, more generally, also sample from the full distribution of models given as a product of Markov kernels and potentials $h_t$
\begin{align}
\label{eq:fk}
    \bbQ_T(\dd{\ft{x}{0}{T}})
        \propto \bbP_0(\dd{x_0}) \left\{\prod_{t=0}^T h_t(x_t)\right\} \left\{\prod_{t=1}^T P_t(\dd{x_t} \mid x_{t-1}) \right\}.
\end{align}
which recover case of \eqref{eq:ssm} by setting $h_t(x_t) = h_t(y_t \mid x_t)$ in a slight abuse of notation. 

The aforementioned finite-state methods, Gaussian approximations, and SMC methods are based on sequential forward and backward recursions which allow for computationally efficient algorithms which scale linearly in the number of time steps $\bigO(T)$. Although this computational complexity is (in a sense) optimal in classical single-core computers, it is not optimal in multi-core parallel computers which are capable of sub-linear time-complexity in terms of span-complexity \citep{Cormen:2009}. Here span-complexity refers to the actual wall-clock time taken by a method when run on a parallel computer which can be less than $\bigO(T)$ even when the size of data is $T$. The sequential approximations for filtering and smoothing, in their standard formulation, have a linear time complexity in $T$ even when run on a parallel computer, which is due to the inherent sequential nature of the computations.

However, it was recently shown in \citet{Sarkka2021ParallelKF} that Bayesian filtering and smoothing recursions (including, e.g., the Kalman filter and smoother) can be reformulated in terms of associative operators that can be time-parallelized to $\bigO(\log T)$ span-complexity by using a parallel scan algorithm. In \citet{Hassan:2021}, similar methods were developed for finite-state models, and \citet{Yaghoobi2021ParallelEKF} developed Gaussian approximation based parallel methods for non-linear SSMs. These methods reduce the computational cost from linear to logarithmic in the number of observations on highly parallel hardware such as graphics processing units (GPUs). Unfortunately, the general formulation of \citet{Sarkka2021ParallelKF} is not directly applicable to SMC-based particle filters and smoothers, as propagating the associative operator appearing in \citet{Sarkka2021ParallelKF} is exactly what SMC offers to do in the first place. The aim of this article is to fix this shortcoming by proposing a parallel-in-time formulation of SMC, the de-Sequentialized Monte Carlo (dSMC) method, that can be used -- either as a standalone method, or in combination with Gaussian approximations -- in order to perform Monte Carlo inference in general SSMs. However, instead of using an associative operator formulation as in \citet{Sarkka2021ParallelKF}, the method uses parallel merging of blocks in a tree structure.

\subsection{Related work}
    \label{subsec:related-work}
    Temporal parallelization of general Bayesian filters and smoothers have recently been discussed in \citet{Sarkka2021ParallelKF}, \citet{Hassan:2021}, and \citet{Yaghoobi2021ParallelEKF}, but only in the contexts of Gaussian approximations and finite-state models. Parallelization methods for Kalman type of (ensemble) filters via parallel matrix computations over the state dimension are presented in \citet{Lyster:1997} and \citet{Evensen:2003}. In the context of SMC methods, parallelization over particles has been considered in \citet{Lee:2010,Rosen:2013,Murray2016ParallelResampling}, however, these methods do not address the time dimension and their computational complexity is still linear in $T$ on parallel hardware. In the context of variational inference~\citep[see, e.g.][]{blei2017variational}, it was also noted in \citet{aitchison2019tensor} that operations akin to sequential importance sampling could be easily written as chaining matrix multiplications, allowing to parallelize these on a GPU, both in the time and particle dimensions. The work of~\citet{SinghBlockedGibbs2017} considers blocking strategies for particle Gibbs algorithm, using the Markov property to allow the treatment of non-contiguous time blocks in parallel. Their method however works better for larger blocks, with significant overlap, thereby reducing its parallelization properties, they also do not consider parallelization of particle smoothing. Orthogonally to these direction, coupled smoothing methods, introduced in \citet{jacob2019smoothing} and further developed in \citet{middleton2019unbiased,lee2020coupled}, allow to run compute unbiased estimates of particle smoothers. This allows to parallelize calculation of smoothing expectations by aggregating many unbiased smoothers together.

Closest to our work is \citet{Lindsten2017Divideandconquer} which considers the case of already formed graphical models. In fact, once the tree structure of dSMC is built, our algorithm operates similarly to the divide and conquer SMC algorithm of \cite{Lindsten2017Divideandconquer}, which propagates and merges particle samples from children nodes to a parent node. In their article, \citet{Lindsten2017Divideandconquer} show the consistency of their algorithm in terms of convergence in probability. This was further improved by \citet{kuntz2021divide} who derived additional theoretical properties of estimates computed from divide-and-conquer SMC. These results can be applied to
dSMC as well. However, our method differs from both these articles in several ways. First, \citet{Lindsten2017Divideandconquer} do not consider \emph{modifying} the structure of a pre-existing graphical model to be able to parallelize it. Second, the bounds for $\mathcal{L}_p$ errors we derive in this article depend explicitly (and polynomially) in $T$. These results are specific to dSMC as a parallel algorithm. Third, we derive a parallel-in-time particle Gibbs algorithm for dSMC which can be more generally applied to \citet{Lindsten2017Divideandconquer}. Lastly, we introduce parallel-in-time initialization of the algorithm and lazy resamplings as a way to speed up the algorithm and allow for better scalability in the number of particles used. 

Finally, we note that \citet{ding2018treebased} introduced a smoothing algorithm leveraging the same binary tree. However, their method differs from ours in several aspects. (i) The main goal of \citet{ding2018treebased} is to reduce the variance of smoothing algorithms by computing adapted target distributions at each node of the tree. (ii) As a consequence, they do not directly address parallelization in time (our main motivation), and, in fact, do not allow for it as their algorithm requires to run a particle filter and a particle smoother a priori. They also (iii) do not discuss approximated LGSSM PIT initialization, lazy schemes, or particle Gibbs extensions.

\subsection{Contributions}
    \label{subsec:contribution}
    In Section~\ref{sec:method}, we introduce a formal divide-and-conquer formulation of the smoothing distribution for a class of Feynman-Kac models, which is then used to define dSMC. We then proceed to study the properties of dSMC, in particular, we derive $\mathcal{L}_p$ error bounds that only scale polynomially in $T$ for balanced tree representations of the smoothing distribution. Section~\ref{sec:pgibbs} is concerned with introducing the conditional formulation of dSMC. This is then used to define a PIT particle Gibbs algorithm. In Section~\ref{sec:proposals}, we discuss how to construct adapted proposals without breaking the logarithmic scaling in $T$, and then show how parallel resampling methods can be used to lazily increase the number of particles used in dSMC. Finally, in Section~\ref{sec:experiments} we experimentally demonstrate the statistical and computational properties of our method on a suite of examples. The article concludes with a discussion of the limitations and possible improvements of the de-Sequentialized \MC{} method.

\section{De-Sequentialized Monte Carlo}
\label{sec:method}
We first introduce the core components required for building a parallel-in-time (PIT) particle smoother algorithm, that we coin de-Sequentialized Monte Carlo (dSMC). Our method relies on a divide-and-conquer approach, where we recursively stitch together partial smoothing distributions $\ft{\bbQ}{a}{b}(\dd{\ft{x}{a}{b}})$ in order to form the final estimate. In order to do this, we first present the tree structure associated with smoothing in state-space models, then we discuss how importance sampling-resampling can be leveraged to create joint samples from marginal ones. Finally, we describe the resulting algorithm and derive convergence bounds for it. For the sake of generality, we will consider the potential formulation $h_t(x_t)$ in \eqref{eq:fk}, which possibly depends on $y_t$, but, by a slight abuse of language, we will still refer to $\bbQ_T$ as the smoothing distribution.

\subsection{Tree structure}\label{subsec:tree-structure}
    The recursive expressions for the smoothing distribution
\begin{align}
    \bbQ_{T}(\dd{\ft{x}{0}{T}}) = \frac{1}{L_T} \left[\bbP_0(\dd{x_0}) \prod_{t=1}^T P_t(\dd{x_t} \mid x_{t-1})\right] \prod_{t=0}^T h_t(x_t), 
\end{align}
where $L_T $ is a normalizing constant, 
are given by the forward Feynman-Kac recursion~\citep[see, e.g.][]{DelMoral2004Book, Chopin2020Book}
\[\bbQ_{t+1}(\dd{\ft{x}{0}{t+1}})
  \propto
  \bbQ_{t}(\dd{\ft{x}{0}{t}})
  h_{t+1}(x_{t+1}) P_t(\dd{x_{t+1}} \mid x_t),
\]  
or the backward one
\[\bbQ_{T}(\dd{\ft{x}{t}{T}}) \propto h_{t+1}(x_{t+1}) p_t(x_{t+1} \mid x_t) \dd{x_t} \bbQ_{T}(\dd{\ft{x}{t+1}{T}}),\]
when $P_t(\dd{x_{t+1}} \mid x_t)$ admits a density $p_t(x_{t+1} \mid x_t)$ with
respect to a fixed ($x_t$-independent) measure $\dd{x}_{t+1}$. 
Leveraging these recursions respectively corresponds to particle filtering
and particle smoothing algorithms, and results in algorithms for sampling from
$\bbQ_{0:T}$ that scale computationally in $\bigO(T)$.

In this section we instead propose a divide-and-conquer recursive construction
of the smoothing density $\bbQ_T$. In order to do so, we introduce the concept
of partial smoothing distributions. 
\begin{definition}\label{def:twisted-model}
    Let $(\nu_c(\dd{x_c}))_{c=0}^T$ be a collection of probability measures,
    such that for all $c> 0$ and all $x_{c-1} \in \Chi{}$, $P_c(\dd{x_c}| x_{c-1})$ 
    is absolutely continuous with respect to $\nu_c(\dd{x_c})$. Then
    for any $0 \leq a \leq b \leq T$, we can define
    \begin{align}
        \ft{\bbQ^{\nu}}{a}{b}(\dd{\ft{x}{a}{b}}) \coloneqq
        \frac{1}{\ft{L^{\nu}}{a}{b}} \left[\nu_{a}(\dd{x_a}) \prod_{t=a+1}^b
        P_t(\dd{x_t} \mid x_{t-1}) \right] \prod_{t=a+1}^b h_t(x_t),
    \end{align}
    where $\ft{L^{\nu}}{a}{b}$ is a normalizing constant (assumed to be
    positive), and by convention the product over an empty set is $1$, so that,
    for any $a$, $\ft{\bbQ^{\nu}}{a}{a}(\dd{x_a}) = \nu_a(\dd{x}_a)$ and
    $L^{\nu}_{a:a} = 1$.
\end{definition}
Provided that $\nu_0$ defines the filtering posterior of $x_0$, we can then
recover the original $\ft{\bbQ}{0}{T}$ from  $\ft{\bbQ^{\nu}}{0}{T}$.
\begin{proposition}\label{prop:reweighting} For any family $\ft{\nu}{0}{T}$
    given by Definition \ref{def:twisted-model}, and such that 
    $\nu_0(\dd{x_0}) \propto h_0(x_0) \bbP_0(\dd{x_0})$, we have
    \begin{align}\label{eq:reweighting} \ft{\bbQ}{0}{T}(\dd{\ft{x}{0}{T}}) 
    = \ft{\bbQ^{\nu}}{0}{T}(\dd{\ft{x}{0}{T}}).
    \end{align}
\end{proposition}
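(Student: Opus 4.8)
The plan is to unfold both sides of \eqref{eq:reweighting} into their defining products and check they agree. First I would write out $\ft{\bbQ^{\nu}}{0}{T}$ using Definition \ref{def:twisted-model} with $a=0$, $b=T$:
\[
\ft{\bbQ^{\nu}}{0}{T}(\dd{\ft{x}{0}{T}})
= \frac{1}{\ft{L^{\nu}}{0}{T}}\,
\nu_0(\dd{x_0})
\left[\prod_{t=1}^T P_t(\dd{x_t}\mid x_{t-1})\right]
\prod_{t=1}^T h_t(x_t).
\]
Then I would substitute the assumed form $\nu_0(\dd{x_0}) \propto h_0(x_0)\,\bbP_0(\dd{x_0})$, say $\nu_0(\dd{x_0}) = L_0^{-1} h_0(x_0)\,\bbP_0(\dd{x_0})$ with $L_0 = \int h_0(x_0)\,\bbP_0(\dd{x_0})$. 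Absorbing $h_0(x_0)$ into the product $\prod_{t=1}^T h_t(x_t)$ turns it into $\prod_{t=0}^T h_t(x_t)$, and we recover exactly the unnormalized expression defining $\bbQ_T$, up to the constant $L_0 \ft{L^{\nu}}{0}{T}$.

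The second step is to reconcile the normalizing constants. Since both $\bbQ_T$ and $\ft{\bbQ^{\nu}}{0}{T}$ are probability measures on the same space and their unnormalized densities are identical (as measures), their normalizing constants must coincide; concretely $L_T = L_0 \,\ft{L^{\nu}}{0}{T}$. So strictly speaking I only need the \emph{proportionality} $\nu_0 \propto h_0 \bbP_0$, not the explicit constant — the normalization takes care of itself. I would phrase this by noting that two probability measures that are mutually proportional as (non-normalized) measures are equal, which is the cleanest way to avoid fussing over constants.

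There is essentially no hard step here — this is a bookkeeping identity — but the one point that deserves care is the measure-theoretic consistency: I should confirm that all the $\dd{x_t}$-measures match up on both sides, i.e. that replacing $\bbP_0(\dd{x_0})$ by $\nu_0(\dd{x_0})$ is legitimate as an equality of measures (not just densities) and that the absolute-continuity hypotheses of Definition \ref{def:twisted-model} guarantee every object is well-defined. The mild subtlety is simply that $h_0$ enters the definition of $\nu_0$ but, unlike $h_1,\dots,h_T$, does \emph{not} appear as an explicit factor in the Definition-\ref{def:twisted-model} product for $\ft{\bbQ^{\nu}}{0}{T}$ (whose potential product starts at $t=a+1=1$); tracking this off-by-one in the potential index is the only place an inattentive reader could slip.
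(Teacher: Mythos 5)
Your proof is correct: unfolding Definition~\ref{def:twisted-model} with $a=0$, $b=T$, substituting $\nu_0 \propto h_0\,\bbP_0$, absorbing $h_0$ into the potential product, and letting normalization fix the constants (so that $L_T = L_0\, \ft{L^{\nu}}{0}{T}$) is exactly the intended argument, which the paper treats as immediate and omits. Your remark on the off-by-one in the potential index (the product in Definition~\ref{def:twisted-model} starting at $t=a+1$, which is precisely why $h_0$ must be injected through $\nu_0$) is a worthwhile clarification.
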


The partial smoothing distributions $\left(\ft{\bbQ^{\nu}}{a}{b}\right)_{0 \leq a < b, \leq T}$ can then be stitched together, forming a recursive structure for the smoothing operation. 
\begin{proposition}\label{prop:stitching}
     For any $0 \leq a < c < b \leq T$, we have
     \begin{align}\label{eq:stitching}
        \ft{\bbQ^{\nu}}{a}{b}(\dd{\ft{x}{a}{b}}) 
            &=
            \frac{\ft{L^{\nu}}{a}{c-1} \ft{L^{\nu}}{c}{b} }
                 {\ft{L^{\nu}}{a}{b}}
            \omega^{\nu}_c(x_{c-1}, x_c) 
            \ft{\bbQ^{\nu}}{a}{c-1}(\dd{\ft{x}{a}{c-1}}) 
            \ft{\bbQ^{\nu}}{c}{b}(\dd{\ft{x}{c}{b}}),
     \end{align}
     where $\omega^{\nu}_c$ is defined as the following Radon–Nikodym derivative:
     \begin{align}
         \omega^{\nu}_c(x_{c-1}, x_c) \coloneqq \frac{P_c(\dd{x_c} \mid x_{c-1}) h_c(x_c)}{\nu_c(\dd{x_c})}. \label{eq:weights-formula}
     \end{align}
\end{proposition}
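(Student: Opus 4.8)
The plan is to prove the identity first at the level of the \emph{unnormalized} measures and then divide through by the normalizing constants at the very end. Write $\gamma^{\nu}_{a:b}(\dd{\ft{x}{a}{b}}) \coloneqq \ft{L^{\nu}}{a}{b}\,\ft{\bbQ^{\nu}}{a}{b}(\dd{\ft{x}{a}{b}})$, so that by Definition~\ref{def:twisted-model} this is simply the product measure $\nu_a(\dd{x_a})\left[\prod_{t=a+1}^b P_t(\dd{x_t}\mid x_{t-1})\right]\prod_{t=a+1}^b h_t(x_t)$ (with $\gamma^{\nu}_{a:a}=\nu_a$ by the empty-product convention). In this form the whole statement becomes a bookkeeping exercise: split each of the two products appearing in $\gamma^{\nu}_{a:b}$ at the index $c$ and regroup the factors.

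The one genuinely measure-theoretic ingredient I would record up front is the meaning of $\omega^{\nu}_c$. Since $P_c(\dd{x_c}\mid x_{c-1})$ is absolutely continuous with respect to $\nu_c(\dd{x_c})$ for every $x_{c-1}$ — which is exactly the hypothesis built into Definition~\ref{def:twisted-model} — the Radon--Nikodym derivative in \eqref{eq:weights-formula} is well defined, and one has the identity of measures $\omega^{\nu}_c(x_{c-1},x_c)\,\nu_c(\dd{x_c}) = P_c(\dd{x_c}\mid x_{c-1})\,h_c(x_c)$ for each fixed $x_{c-1}$. This is the only place where the assumptions on the family $\ft{\nu}{0}{T}$ are used.

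With that in hand, I would write $\prod_{t=a+1}^b P_t = \bigl(\prod_{t=a+1}^{c-1} P_t\bigr)\,P_c\,\bigl(\prod_{t=c+1}^{b} P_t\bigr)$ and likewise $\prod_{t=a+1}^b h_t = \bigl(\prod_{t=a+1}^{c-1} h_t\bigr)\,h_c\,\bigl(\prod_{t=c+1}^{b} h_t\bigr)$, insert $\nu_c(\dd{x_c})/\nu_c(\dd{x_c})$ (legitimate by the previous paragraph), and recognize the three resulting blocks: $\nu_a(\dd{x_a})\prod_{t=a+1}^{c-1}P_t\prod_{t=a+1}^{c-1}h_t = \gamma^{\nu}_{a:c-1}$, then $P_c(\dd{x_c}\mid x_{c-1})h_c(x_c)/\nu_c(\dd{x_c}) = \omega^{\nu}_c(x_{c-1},x_c)$, and finally $\nu_c(\dd{x_c})\prod_{t=c+1}^{b}P_t\prod_{t=c+1}^{b}h_t = \gamma^{\nu}_{c:b}$. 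The degenerate case $c=a+1$ is covered by the conventions $\gamma^{\nu}_{a:a}=\nu_a$, $\ft{L^{\nu}}{a}{a}=1$. This gives $\gamma^{\nu}_{a:b} = \gamma^{\nu}_{a:c-1}\,\omega^{\nu}_c\,\gamma^{\nu}_{c:b}$; substituting $\gamma^{\nu}_{a:b}=\ft{L^{\nu}}{a}{b}\ft{\bbQ^{\nu}}{a}{b}$, $\gamma^{\nu}_{a:c-1}=\ft{L^{\nu}}{a}{c-1}\ft{\bbQ^{\nu}}{a}{c-1}$, $\gamma^{\nu}_{c:b}=\ft{L^{\nu}}{c}{b}\ft{\bbQ^{\nu}}{c}{b}$ and dividing by $\ft{L^{\nu}}{a}{b}$ yields \eqref{eq:stitching} exactly.

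I do not anticipate a serious obstacle: the argument is essentially rearranging a finite product of kernels. The only point requiring a little care is making the ``cancellation'' of $\nu_c(\dd{x_c})$ rigorous as an equality of measures rather than of formal densities — i.e., checking that the disintegration identity $\omega^{\nu}_c\,\nu_c = P_c h_c$ may be multiplied against the remaining $\sigma$-finite product measures — but this is immediate from the absolute-continuity assumption together with the fact that every measure in sight is a finite product of Markov kernels and the normalizing constants $\ft{L^{\nu}}{a}{b}$ are assumed positive (hence the divisions are valid).
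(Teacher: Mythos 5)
Your proof is correct and is essentially the same argument as the paper's: the paper expands the right-hand side via Definition~\ref{def:twisted-model}, cancels $\nu_c(\dd{x_c})$ against the denominator of $\omega^{\nu}_c$, and regroups the products, which is exactly your split-and-regroup computation run in the opposite direction. Phrasing it through the unnormalized measures $\gamma^{\nu}_{a:b}$ and spelling out the Radon--Nikodym identity $\omega^{\nu}_c(x_{c-1},x_c)\,\nu_c(\dd{x_c}) = P_c(\dd{x_c}\mid x_{c-1})\,h_c(x_c)$ is only a more explicit rendering of the same cancellation step.
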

\begin{proof}
    For all $0\leq a < c < b \leq T$, we have:
    \begin{align}
        &\quad \frac{\ft{L^{\nu}}{a}{c-1} \ft{L^{\nu}}{c}{b}}{\ft{L^{\nu}}{a}{b}}
         \omega^{\nu}_c(x_{c-1}, x_c) 
         \ft{\bbQ^{\nu}}{a}{c-1}(\dd{\ft{x}{a}{c-1}}) 
         \ft{\bbQ^{\nu}}{c}{b}(\dd{\ft{x}{c}{b}})\\
        &= \frac{1}{\ft{L^{\nu}}{a}{b}} \frac{P_c(\dd{x_c}\mid x_{c-1}) h_c(x_c)}{\nu_c(\dd{x_c})} \\
        &\times \left[\nu_{a}(\dd{x_a}) \prod_{t=a+1}^b P_t(\dd{x_t} \mid x_{t-1}) \right] \prod_{t=a+1}^{c-1} h_t(x_t) \\
        &\times \left[\nu_{c}(\dd{x_c}) \prod_{t=c+1}^{b} P_t(\dd{x_t} \mid x_{t-1}) \right] \prod_{t=c+1}^b h_t(x_t)\\
        &=\frac{1}{\ft{L^{\nu}}{a}{b}} \left[\nu_{a}(\dd{x_a}) \prod_{t=a+1}^b P_t(\dd{x_t} \mid x_{t-1}) \right] \prod_{t=a+1}^{b} h_t(x_t)\\
        &=\ft{\bbQ^{\nu}}{a}{b}(\dd{\ft{x}{a}{b}}).
    \end{align}
\end{proof}
The recursive property exhibited by Proposition~\ref{prop:stitching} allows us to construct an arbitrary tree structure on the smoothing distribution. This construction is illustrated in Figure~\ref{fig:smoothing-tree-example}. In practice, we could use any ordered binary tree structure on $\{0, 1, \ldots, T\}$ to define a well-posed recursive representation of $\ft{\bbQ}{0}{T}$, but, as we will see in Sections \ref{subsec:stitching} and \ref{subsec:algorithm}, balanced representations offer better statistical and computational properties.
\begin{figure}
    \begin{center}
    \begin{forest}
    for tree={
        grow=south,
        minimum size=3ex, inner sep=1pt,
        s sep=7mm
            }
    [$\ft{\bbQ}{0}{9}$
        [$\ft{\bbQ}{0}{5}$
            [$\ft{\bbQ}{0}{2}$
                [$\ft{\bbQ}{0}{1}$
                    [$\ft{\bbQ}{0}{0}$]
                    [$\ft{\bbQ}{1}{1}$]
                ]
                [$\ft{\bbQ}{2}{2}$]
            ]
            [$\ft{\bbQ}{3}{5}$
                [$\ft{\bbQ}{3}{4}$
                    [$\ft{\bbQ}{3}{3}$]
                    [$\ft{\bbQ}{4}{4}$]
                ]
                [$\ft{\bbQ}{5}{5}$]
            ]
        ]
        [$\ft{\bbQ}{6}{9}$
            [$\ft{\bbQ}{6}{8}$
                [$\ft{\bbQ}{6}{6}$]
                [$\ft{\bbQ}{7}{8}$
                    [$\ft{\bbQ}{7}{7}$]
                    [$\ft{\bbQ}{8}{8}$]
                ]
            ]
            [$\ft{\bbQ}{9}{9}$]
        ]
    ]
\end{forest}
    \caption{Example of a recursive tree structure for $\ft{\bbQ}{0}{9}$.}
    \label{fig:smoothing-tree-example}
    \end{center}
\end{figure}
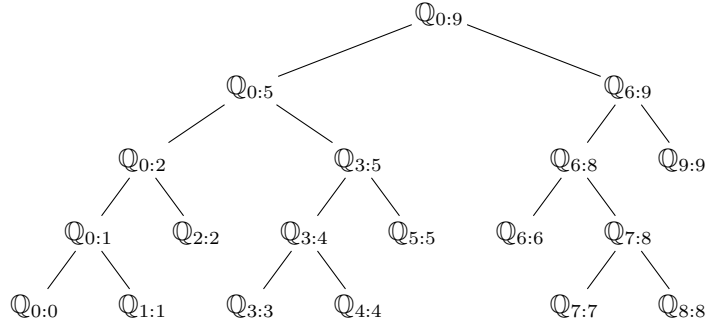

In practice, the stitching operation described by
Proposition~\ref{prop:stitching} is not tractable in closed-form, and we need to
resort to \MC{} integration instead.

\subsection{Sample stitching}\label{subsec:stitching}
    For notational simplicity, in this section and all subsequent ones we do not emphasize the dependency of our estimates on $\nu$. Suppose that we have two independent \MC{} approximations 
\begin{align}
    \ft{\bbQ}{a}{c-1} 
        &\approx \ft{\bbQ^N}{a}{c-1} \coloneqq \sum_{n=1}^N w_{c-1}^n \delta_{\ft{X^n}{a}{c-1}} \\
    \ft{\bbQ}{c}{b} 
        &\approx \ft{\bbQ^N}{c}{b} \coloneqq \sum_{n=1}^N w_{c}^n \delta_{\ft{X^n}{c}{b}},
\end{align} 
following \citet{Lindsten2017Divideandconquer,Kuntz2021productform}, we can then form the ``product-form'' importance empirical density
\begin{align}
    \ft{\widetilde{\bbQ}^N}{a}{b} 
        &\coloneqq \sum_{m, n=1}^N W_c^{mn}\delta_{[\ft{X^m}{a}{c-1}, \ft{X^n}{c}{b}]},\label{eq:estimator}
\end{align}
where 
\begin{align}
    W_c^{mn} = \frac{w_{c-1}^m w_{c}^n \omega_c(X^m_{c-1}, X^n_{c})}{\sum_{i,j=1}^N w_{c-1}^i w_{c}^j\omega_c(X^i_{c-1}, X^j_{c})}.\label{eq:normalized-weights}
\end{align}
As described in \citet{Kuntz2021productform}, this estimator exhibits better properties than the ``naive'' estimator  
\[\sum_{n=1}^N  \frac{w_{c-1}^n w_{c}^n\omega_c(X^n_{c-1},
        X^n_{c-1})}{\sum_{m=1}^N w_{c-1}^m w_{c}^m \omega_c(X^m_{c-1},
X^m_{c-1})}\delta_{[\ft{X^n}{a}{c-1}, \ft{X^n}{c}{b}]}
\]
when the $(X^n_{c-1})_{n=1}^N$, and $(X^n_{c})_{n=1}^N$ have been sampled independently.

Moreover, the denominator of~\eqref{eq:normalized-weights} directly provides us with an estimate of the normalizing constant increment, that is, if $\ft{L^N}{a}{c-1}$ and $\ft{L^N}{c}{b}$ are estimates of the normalizing constants of $\ft{\bbQ}{a}{c-1}$ and $\ft{\bbQ}{c}{b}$, respectively, then, following Proposition~\ref{prop:stitching}, we know that 
\begin{align}
    \ft{L}{a}{b} 
        &= \ft{L}{a}{c-1} \ft{L}{c}{b} \iint \omega_c(x_{c-1}, x_c) \ft{\bbQ}{a}{c-1}(\dd{x_{c-1}}) \ft{\bbQ}{c}{b}(\dd{x_{c}}) \label{eq:Lab}\\
        &\approx \ft{L^N}{a}{c-1} \ft{L^N}{c}{b} \sum_{i,j=1}^N w_{c-1}^i w_{c}^j\omega_c(X^i_{c-1}, X^j_{c-1}).
\end{align}

\begin{algorithm}[t]
\tcp{All operations on indices $m,n$ are done in parallel}
\SetAlgoLined
\Fn{\textsc{Combine}$\left(\ft{X^{1:N}}{a}{c-1}, w^{1:N}_{c-1}, \ft{L^N}{a}{c-1}, \ft{X^{1:N}}{c}{b}, w^{1:N}_{c}, \ft{L^N}{c}{b}\right)$}
{
\tcp{In all but the initial step, the weights $w^m_{c-1}$ and $w^n_{c}$ will be $1/N$.}
$L_c^N \leftarrow \sum_{m,n=1}^N \omega_{c}(X_{c-1}^m, X_{c}^n) w^m_{c-1} w^n_c$\;
$W_c^{mn} \leftarrow \omega_{c}(X_{c-1}^m, X_{c}^n) w^m_{c-1} w^n_c / L_c^N$\;
\tcp{in parallel, using Multinomial, Systematic, etc}
Sample $N$ times from $\sum_{m,n}W_{mn} \delta_{X_{c}^m}(\dd{x_{c}}) \delta_{X_{c-1}^n}(\dd{x_{c-1}})$ to get $(X_{c-1}^{l_n}, X_{c}^{r_n})_{1 \leq n \leq N}$\; 
\tcp{The two loops below can be done in parallel}
\ParFor{$c=a, \ldots, c-1$}{
$\widetilde{X}^{n}_{c'} \leftarrow X_{c'}^{l_n}$ \;
}
\ParFor{$c'=c, \ldots, b$}{
$\widetilde{X}^{n}_{c'} \leftarrow X_{c'}^{r_n}$ \;

}
\Return{$\ft{\widetilde{X}^{1:N}}{a}{b}$, $\ft{L^N}{a}{c-1} \ft{L^N}{c}{b} L_c^N$}
}
\caption{Block combination}
\label{algo:block-combination}
\end{algorithm}

When the importance estimator in~\eqref{eq:estimator} has been formed,
we can then resample $N$ pairs of partial smoothing paths $(l_n, r_n)_{n=1}^N$
according to the normalized weights $W_c^{mn}$ to obtain a stitched \MC{} approximation $\ft{\bbQ^N}{a}{b} \coloneqq \frac{1}{N} \sum_{n=1}^N \delta_{[\ft{X^{l_n}}{a}{c-1}, \ft{X^{r_n}}{c}{b}]}$. This construction is summarized in Algorithm~\ref{algo:block-combination}, from which we can also compute the normalizing constant increment as a by-product.

Under this construction, we can show that the resulting \MC{} $\mathcal{L}_p$ error is well-behaved. The proof of the following proposition is given in Appendix~\ref{app:proof}.

\begin{proposition}\label{prop:consistency}
     Let $p\geq1$ be an integer, suppose that the $(l_n, r_n)_{n=1}^N$ are
     sampled according to a multinomial distribution, that $\varphi$ is a
     bounded measurable function, and that $\omega_c$ is bounded.
     If for any measurable bounded functions $\ft{\varphi}{a}{c-1}$ and $\ft{\varphi}{c}{b}$ we have 
     \begin{align}
         \bbE{}\left[\left|\ft{\bbQ{}}{a}{c-1}(\ft{\varphi}{a}{c-1}) - \ft{\bbQ{}^{N}}{a}{c-1}(\ft{\varphi}{a}{c-1})\right|^p\right]^{1/p} &\leq \ft{C^p}{a}{c-1}\frac{\norm{\ft{\varphi}{a}{c-1}}_{\infty}}{N^{1/2}}\\
         \bbE{}\left[\left|\ft{\bbQ{}}{c}{b}(\ft{\varphi}{c}{b}) - \ft{\bbQ{}^{N}}{c}{b}(\ft{\varphi}{c}{b})\right|^p\right]^{1/p} &\leq \ft{C^p}{c}{b}\frac{\norm{\ft{\varphi}{c}{b}}_{\infty}}{N^{1/2}},
     \end{align}
     for some constants $\ft{C^p}{a}{c-1}, \ft{C^p}{c}{b}$ independent of $N$, $\ft{\varphi}{a}{c-1}$, and $\ft{\varphi}{c}{b}$, then,
     \begin{align}
       \bbE{}\left[\left|\ft{\bbQ{}}{a}{b}(\varphi) - \ft{\bbQ{}^{N}}{a}{b}(\varphi)\right|^p\right]^{1/p} \leq \left(4 \min(\ft{C^p}{a}{c-1}, \ft{C^p}{c}{b})\norm{\bar{\omega}_c}_{\infty} + 2^{(p+1)/p}\right)\frac{\norm{\varphi}_{\infty}}{N^{1/2}}
     \end{align}
     where $\bar{\omega}_c\coloneqq \omega_c/Z_c$, and $Z_c\coloneqq L_{a:b} / L_{a:c-1}
     L_{c:b}$. 
\end{proposition}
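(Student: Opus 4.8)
The plan is to split the error into a \emph{resampling} contribution and an \emph{importance-sampling} contribution and bound each in $\mathcal{L}_p$ separately. Write $\ft{\widetilde{\bbQ}^N}{a}{b}$ for the product-form importance empirical measure of~\eqref{eq:estimator} and $\mathcal{F}$ for the $\sigma$-algebra generated by the incoming particles and weights $\ft{X^{1:N}}{a}{c-1}, w^{1:N}_{c-1}, \ft{X^{1:N}}{c}{b}, w^{1:N}_{c}$; conditionally on $\mathcal{F}$ the multinomial draws $(l_n,r_n)_{n=1}^N$ are i.i.d.\ with $\bbE[\ft{\bbQ^N}{a}{b}(\varphi)\mid\mathcal{F}] = \ft{\widetilde{\bbQ}^N}{a}{b}(\varphi)$. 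Minkowski's inequality then gives
\[\bbE\left[\left|\ft{\bbQ}{a}{b}(\varphi) - \ft{\bbQ^N}{a}{b}(\varphi)\right|^p\right]^{1/p} \leq \bbE\left[\left|\ft{\bbQ^N}{a}{b}(\varphi) - \ft{\widetilde{\bbQ}^N}{a}{b}(\varphi)\right|^p\right]^{1/p} + \bbE\left[\left|\ft{\widetilde{\bbQ}^N}{a}{b}(\varphi) - \ft{\bbQ}{a}{b}(\varphi)\right|^p\right]^{1/p},\]
so it suffices to bound the first term by $2^{(p+1)/p}\norm{\varphi}_{\infty}/N^{1/2}$ and the second by $4\min(\ft{C^p}{a}{c-1},\ft{C^p}{c}{b})\norm{\bar{\omega}_c}_{\infty}\norm{\varphi}_{\infty}/N^{1/2}$.

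For the resampling term I would condition on $\mathcal{F}$: then $\ft{\bbQ^N}{a}{b}(\varphi) - \ft{\widetilde{\bbQ}^N}{a}{b}(\varphi)$ is an average of $N$ i.i.d.\ centered random variables bounded by $2\norm{\varphi}_{\infty}$, and a standard $\mathcal{L}_p$ bound for such averages \citep{DelMoral2004Book,Chopin2020Book} yields $\bbE[|\ft{\bbQ^N}{a}{b}(\varphi) - \ft{\widetilde{\bbQ}^N}{a}{b}(\varphi)|^p\mid\mathcal{F}]^{1/p} \leq 2^{(p+1)/p}\norm{\varphi}_{\infty}/N^{1/2}$; taking expectations removes the conditioning. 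This contribution is independent of the children's errors.

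The importance-sampling term is where the real work is. Set $\Pi^N \coloneqq \ft{\bbQ^N}{a}{c-1}\otimes\ft{\bbQ^N}{c}{b}$ and $\Pi \coloneqq \ft{\bbQ}{a}{c-1}\otimes\ft{\bbQ}{c}{b}$, so that $\ft{\widetilde{\bbQ}^N}{a}{b}(\varphi) = \Pi^N(\bar{\omega}_c\varphi)/\Pi^N(\bar{\omega}_c)$ and, by Proposition~\ref{prop:stitching}, $\ft{\bbQ}{a}{b}(\varphi) = \Pi(\bar{\omega}_c\varphi)$ with $\Pi(\bar{\omega}_c) = 1$. The self-normalised importance-sampling manipulation gives
\[\ft{\widetilde{\bbQ}^N}{a}{b}(\varphi) - \ft{\bbQ}{a}{b}(\varphi) = \frac{(\Pi^N - \Pi)\!\left(\bar{\omega}_c\left[\varphi - \ft{\bbQ}{a}{b}(\varphi)\right]\right)}{\Pi^N(\bar{\omega}_c)},\]
and since $|\ft{\widetilde{\bbQ}^N}{a}{b}(\varphi) - \ft{\bbQ}{a}{b}(\varphi)| \leq 2\norm{\varphi}_{\infty}$ deterministically, writing $\Pi^N(\bar{\omega}_c) = 1 + (\Pi^N - \Pi)(\bar{\omega}_c)$ and $1/(1+x) = 1 - x/(1+x)$ turns this into the denominator-free bound
\[\left|\ft{\widetilde{\bbQ}^N}{a}{b}(\varphi) - \ft{\bbQ}{a}{b}(\varphi)\right| \leq \left|(\Pi^N - \Pi)\!\left(\bar{\omega}_c\left[\varphi - \ft{\bbQ}{a}{b}(\varphi)\right]\right)\right| + 2\norm{\varphi}_{\infty}\left|(\Pi^N - \Pi)(\bar{\omega}_c)\right|.\]
It then remains to control $(\Pi^N - \Pi)(\psi)$ in $\mathcal{L}_p$ for a generic bounded $\psi$; for this I would telescope over the two \emph{independent} factors,
\[(\Pi^N - \Pi)(\psi) = \left(\ft{\bbQ^N}{a}{c-1} - \ft{\bbQ}{a}{c-1}\right)\!\left(x\mapsto \ft{\bbQ^N}{c}{b}(\psi(x,\cdot))\right) + \left(\ft{\bbQ^N}{c}{b} - \ft{\bbQ}{c}{b}\right)\!\left(x'\mapsto\ft{\bbQ}{a}{c-1}(\psi(\cdot,x'))\right),\]
conditioning on the $c{:}b$ block so that the (now deterministic, $\norm{\psi}_{\infty}$-bounded) integrand of the first term is handled by the hypothesis on $\ft{\bbQ^N}{a}{c-1}$, and using the hypothesis on $\ft{\bbQ^N}{c}{b}$ for the second. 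Taking $\psi = \bar{\omega}_c[\varphi - \ft{\bbQ}{a}{b}(\varphi)]$ (so $\norm{\psi}_{\infty}\leq 2\norm{\bar{\omega}_c}_{\infty}\norm{\varphi}_{\infty}$) and $\psi = \bar{\omega}_c$, combining through Minkowski, and choosing the more favourable of the two telescoping orders, produces the advertised $4\min(\ft{C^p}{a}{c-1},\ft{C^p}{c}{b})\norm{\bar{\omega}_c}_{\infty}$ prefactor.

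The main obstacles are: (i) dealing with the random normalising denominator $\Pi^N(\bar{\omega}_c)$ with no lower bound on $\omega_c$, which the deterministic estimate $|\ft{\widetilde{\bbQ}^N}{a}{b}(\varphi) - \ft{\bbQ}{a}{b}(\varphi)|\leq 2\norm{\varphi}_{\infty}$ takes care of; and (ii) reducing the combination of the two children's constants to their minimum rather than their sum, which is the delicate step and needs the product structure to be exploited carefully — in particular that the interaction part of the product-measure error is of higher order in $N$ — rather than a crude telescoping. Modulo these points, the argument is a routine chain of Minkowski's inequality and the two hypotheses.
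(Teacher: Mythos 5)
Your overall architecture is the same as the paper's: the identical Minkowski split into a resampling term and a self-normalised importance-sampling term, the identical treatment of the resampling term (conditionally on the incoming particles the multinomial pairs are i.i.d., and the Del Moral-type $\mathcal{L}_p$ bound gives $2^{(p+1)/p}\norm{\varphi}_{\infty}N^{-1/2}$), and your $1/(1+x)$ handling of the random normaliser is equivalent to the paper's use of $\ft{\widetilde{\bbQ}^N}{a}{b}(\varphi)-\ft{\breve{\bbQ}^{N}}{a}{b}(\varphi)=\ft{\widetilde{\bbQ}^N}{a}{b}(\varphi)\bigl(1-\ft{\breve{\bbQ}^{N}}{a}{b}(1)\bigr)$, i.e.\ the Lemma~11.2 trick of \citet{Chopin2020Book}.

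The gap is exactly the one you flag yourself: the constant. Telescoping the product-measure error $(\Pi^N-\Pi)(\psi)$ gives, in \emph{either} order, $\ft{C^p}{a}{c-1}+\ft{C^p}{c}{b}$; ``choosing the more favourable telescoping order'' cannot produce the minimum, because each child contributes its own genuine $O(N^{-1/2})$ error term. Concretely, if the left block were estimated exactly (so one may take $\ft{C^p}{a}{c-1}=0$), the stitched estimator is still a self-normalised importance-sampling estimator built on the $c{:}b$ particles and its error scales with $\ft{C^p}{c}{b}\norm{\bar{\omega}_c}_{\infty}\norm{\varphi}_{\infty}N^{-1/2}$; and the cross term $(\bbQ^N-\bbQ)\otimes(\bbQ^N-\bbQ)$ being $O(N^{-1})$ does not remove either first-order contribution. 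So what your argument actually proves is the proposition with $\ft{C^p}{a}{c-1}+\ft{C^p}{c}{b}$ (equivalently $2\max$) in place of $\min(\ft{C^p}{a}{c-1},\ft{C^p}{c}{b})$, and the final hand-waved step to the min is not a fixable omission within your route.

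For comparison, the paper reaches the min differently: it inserts the half-integrated quantity $\ft{\widehat{\bbQ}^N}{c}{b}$ (replace one child's empirical measure by the exact partial smoother), bounds both increments $\ft{\bbQ}{a}{b}(\varphi)-\ft{\widehat{\bbQ}^N}{c}{b}(\varphi)$ and $\ft{\widehat{\bbQ}^N}{c}{b}(\varphi)-\ft{\breve{\bbQ}^{N}}{a}{b}(\varphi)$ by $\ft{C^p}{a}{c-1}\norm{\bar{\omega}_c}_{\infty}\norm{\varphi}_{\infty}N^{-1/2}$, and then symmetrises and takes the better of the two routes. Your suspicion that this is the delicate point is, however, well founded: the second increment is $(\ft{\bbQ}{c}{b}-\ft{\bbQ^N}{c}{b})$ applied to a function that is fixed once one conditions on the left block, so the recursion hypothesis yields $\ft{C^p}{c}{b}$ there rather than $\ft{C^p}{a}{c-1}$; read this way, the paper's decomposition also delivers the sum, and the min form is too strong in general (see the exact-left-block example above). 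Note that the sum/max version is harmless for Corollary~\ref{cor:convergence}: in the balanced-tree recursion the two children's constants coincide at each depth, so $4\min(\cdot,\cdot)\Omega$ and $2\bigl(\ft{C^p}{a}{c-1}+\ft{C^p}{c}{b}\bigr)\Omega$ induce the same arithmetico-geometric recursion and the same rates in $T$.
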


Proposition~\ref{prop:consistency} allows us to derive upper bounds to the total $\mathcal{L}_p$ error as a consequence.

\begin{corollary}\label{cor:convergence}
If the $\omega_c/Z_c$'s are uniformly bounded by some constant $\Omega$ independent of $c$, then 
\begin{align}
     \ft{C^p}{0}{T} = O\left((4\Omega)^D\right),
\end{align}
where $D$ is the depth of the tree structure chosen for the smoothing operation (see Figure~\ref{fig:smoothing-tree-example}). In particular, if the tree is balanced, i.e., $D = \ceil{\log_2(T)}$, 
\begin{align}
     \ft{C^p}{0}{T} = O\left(T^{2 + \log_2(\Omega)}\right)
\end{align}
\end{corollary}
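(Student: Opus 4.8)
The plan is to prove the corollary by induction over the levels of the tree, with Proposition~\ref{prop:consistency} supplying the inductive step. The base case concerns the leaves: a leaf corresponds to a singleton block $\ft{\bbQ}{c}{c}=\nu_c$, whose $N$-sample (importance) approximation satisfies the $\mathcal{L}_p$ bound required as a hypothesis of Proposition~\ref{prop:consistency} with a constant $\ft{C^p}{c}{c}$ that depends neither on $N$ nor on $T$ (classical Monte~Carlo / Marcinkiewicz–Zygmund bounds for i.i.d.\ or importance samples); write $B_p$ for a common upper bound on all the $\ft{C^p}{c}{c}$.

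For the inductive step, consider an internal node whose two children span the blocks $\ft{x}{a}{c-1}$ and $\ft{x}{c}{b}$. Since $2^{(p+1)/p}\le 4$ for every $p\ge 1$ and $\norm{\bar\omega_c}_\infty\le\Omega$ by assumption, Proposition~\ref{prop:consistency} gives
\begin{align*}
    \ft{C^p}{a}{b}\ \le\ 4\Omega\,\min\!\left(\ft{C^p}{a}{c-1},\ft{C^p}{c}{b}\right)+4 .
\end{align*}
Let $f(d)$ be the supremum of $\ft{C^p}{a}{b}$ over all nodes whose subtree has depth at most $d$; it is non-decreasing with $f(0)\le B_p$, and, because both children of a depth-$d$ node have subtrees of depth at most $d-1$, the display above yields the first-order linear recurrence $f(d)\le 4\Omega\, f(d-1)+4$.

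Unrolling this recurrence,
\begin{align*}
    f(D)\ \le\ (4\Omega)^D B_p + 4\sum_{k=0}^{D-1}(4\Omega)^k\ \le\ (4\Omega)^D\!\left(B_p+\tfrac{4}{4\Omega-1}\right).
\end{align*}
To control the correction term one checks that $\Omega\ge 1$: integrating the stitching identity~\eqref{eq:stitching} over $\ft{x}{a}{b}$ shows that the nonnegative function $\bar\omega_c$ has mean $1$ under $\ft{\bbQ}{a}{c-1}\otimes\ft{\bbQ}{c}{b}$, so $\norm{\bar\omega_c}_\infty\ge 1$; hence $4\Omega-1\ge 3$ and $f(D)=O\!\left((4\Omega)^D\right)$, which in particular bounds $\ft{C^p}{0}{T}$ since the root has depth $D$. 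For the balanced tree, $D=\ceil{\log_2 T}\le \log_2 T+1$, and because $(4\Omega)^{\log_2 T}=4^{\log_2 T}\,\Omega^{\log_2 T}=T^2\,T^{\log_2 \Omega}$, we get $(4\Omega)^D\le 4\Omega\cdot T^{2+\log_2\Omega}$, i.e.\ $\ft{C^p}{0}{T}=O\!\left(T^{2+\log_2\Omega}\right)$.

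The only genuinely delicate point is the base case: one must be sure the leaf constants $\ft{C^p}{c}{c}$ are $T$-independent — this is exactly where the parallel-in-time initialization matters, and a little care is needed when the $\nu_c$ are sampled by importance sampling rather than exactly. The observation $\Omega\ge 1$ is what prevents the geometric correction term from spoiling the rate; everything else is a routine unrolling of the recurrence. Note that the $\min$ in Proposition~\ref{prop:consistency} is not needed for the stated bound — for a balanced tree the two children have essentially equal depth — but it would sharpen the estimate considerably for strongly unbalanced trees, where only the shallower child's error gets amplified at each merge.
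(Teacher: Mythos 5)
Your proof is correct and follows essentially the same route as the paper: the leaf constants are controlled by the classical i.i.d./importance-sampling $\mathcal{L}_p$ bound (the paper cites Del Moral, Lemma 7.3.3, giving $\ft{C^p}{c}{c}\le 2^{(p+1)/p}$), and the rest is exactly the unrolling of the arithmetico-geometric recurrence $\ft{C^p}{a}{b}\le 4\Omega\,\min(\ft{C^p}{a}{c-1},\ft{C^p}{c}{b})+2^{(p+1)/p}$ supplied by Proposition~\ref{prop:consistency}, followed by substituting $D=\ceil{\log_2 T}$. Your explicit observation that $\Omega\ge 1$ (since $\bar\omega_c$ integrates to $1$ under $\ft{\bbQ}{a}{c-1}\otimes\ft{\bbQ}{c}{b}$) is a detail the paper leaves implicit, and it correctly justifies absorbing the additive term into the $(4\Omega)^D$ rate.
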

\begin{proof}
    The initial case comes from \cite[][Lemma 7.3.3]{DelMoral2004Book} so that for all $c$, $\ft{C^p}{c}{c} \leq 2^{(p+1)/p}$. The result then proceeds by using inequalities on the progression of arithmetico-geometric sequences~\citep[see, e.g.,][Section 4.2.3]{riley2006mathematical}.        
\end{proof}
\begin{remark}
    The uniform bounding of the quantities 
    \begin{align*}
        \frac{\omega_c}{Z_c} = \frac{\omega_c}{\iint \omega_c(x_{c-1}, x_c) \ft{\bbQ}{a}{c-1}(\dd{x_{c-1}}) \ft{\bbQ}{c}{b}(\dd{x_{c}})},
    \end{align*}
    for all $c$, assumed in Corollary~\ref{cor:convergence}, is, for example, true as soon as the $\omega_c$'s are uniformly bounded below and above. This hypothesis, albeit strong, is typically assumed in proofs of the uniform convergence particle filtering algorithms~\citep{del2001stability}.
\end{remark}

Following \citet[][Proof of Lemma 5]{Crisan2000convergence}, for $p>2$, Chebyshev's inequality and Borel--Cantelli lemma also provides the following corollary.
\begin{corollary}
    Under the same hypotheses, $\ft{\bbQ^N}{0}{T}(\varphi)$ converges almost surely to $\ft{\bbQ}{0}{T}(\varphi)$.
\end{corollary}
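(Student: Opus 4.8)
The plan is to upgrade the $\mathcal{L}_p$ bound of Corollary~\ref{cor:convergence} to almost sure convergence by the standard Chebyshev-plus-Borel--Cantelli route, exactly as in \citet[][Proof of Lemma 5]{Crisan2000convergence}. First I would fix an \emph{integer} $p>2$ — concretely $p=3$ — so that, under the stated hypotheses ($\varphi$ bounded measurable, the $\omega_c$ bounded, balanced tree), Corollary~\ref{cor:convergence} supplies a constant $\ft{C^p}{0}{T}$, independent of $N$, with
\[
    \bbE\left[\left|\ft{\bbQ}{0}{T}(\varphi) - \ft{\bbQ^N}{0}{T}(\varphi)\right|^p\right]^{1/p} \leq \ft{C^p}{0}{T}\,\frac{\norm{\varphi}_{\infty}}{N^{1/2}}.
\]

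Next, for an arbitrary fixed $\varepsilon>0$, Chebyshev's inequality in its moment form (i.e.\ Markov's inequality applied to the nonnegative variable $\left|\ft{\bbQ}{0}{T}(\varphi) - \ft{\bbQ^N}{0}{T}(\varphi)\right|^p$) gives
\[
    \bbP\left(\left|\ft{\bbQ}{0}{T}(\varphi) - \ft{\bbQ^N}{0}{T}(\varphi)\right| > \varepsilon\right) \leq \frac{\left(\ft{C^p}{0}{T}\norm{\varphi}_{\infty}\right)^p}{\varepsilon^p\, N^{p/2}}.
\]
Since $p/2 = 3/2 > 1$, the series $\sum_{N\geq 1} N^{-p/2}$ converges, hence $\sum_{N\geq 1}\bbP\left(\left|\ft{\bbQ}{0}{T}(\varphi) - \ft{\bbQ^N}{0}{T}(\varphi)\right| > \varepsilon\right) < \infty$, and the Borel--Cantelli lemma yields that, with probability one, $\left|\ft{\bbQ}{0}{T}(\varphi) - \ft{\bbQ^N}{0}{T}(\varphi)\right| > \varepsilon$ for only finitely many $N$.

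Finally I would intersect the complementary full-probability events over the countable family $\varepsilon = 1/k$, $k=1,2,\ldots$: on this intersection, for every $k$ one has $\left|\ft{\bbQ}{0}{T}(\varphi) - \ft{\bbQ^N}{0}{T}(\varphi)\right| \leq 1/k$ for all $N$ large enough, which is precisely $\ft{\bbQ^N}{0}{T}(\varphi) \to \ft{\bbQ}{0}{T}(\varphi)$ almost surely. I do not anticipate a genuine obstacle here; the only points needing a line of care are (i) that Proposition~\ref{prop:consistency} is stated for integer $p$, which is why one takes $p=3$ rather than a value arbitrarily close to $2$, and (ii) checking that $\ft{C^p}{0}{T}$ is truly $N$-free, which is already guaranteed by the recursion behind Corollary~\ref{cor:convergence}.
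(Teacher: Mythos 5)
Your argument is exactly the route the paper indicates (following Crisan and Doucet's Lemma 5): the $\mathcal{L}_p$ bound with $N$-free constant $\ft{C^p}{0}{T}$, Markov/Chebyshev with $p>2$ so that $\sum_N N^{-p/2}<\infty$, Borel--Cantelli, and a countable intersection over $\varepsilon = 1/k$. The proof is correct and matches the paper's intended argument, including the observation that $p$ must be an integer strictly greater than $2$ for the stated moment bound to apply.
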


An interesting point to notice is that for very unbalanced trees with depth of order $T$, Proposition~\ref{prop:consistency} recovers the usual exponential scaling in $T$~\citep{Andrieu2001} of the mean squared error, instead of the polynomial scaling obtained when the tree is balanced. 

\subsection{Algorithm}\label{subsec:algorithm}
    The \MC{} approximation of $\ft{\bbQ}{0}{T}$ can be computed using a recursive algorithm, which can be parallelized across all operations happening at each level of the tree depth. At initialization, we simply need to compute the weights of the particles sampled at $t=0$ as per Proposition~\ref{prop:reweighting}, resulting in Algorithm~\ref{algo:smoother-init}. In order to obtain a balanced tree, we can recursively split at the midpoint of the partial smoothing interval, essentially recovering a binary tree when $T+1$ is a power or $2$. This results in Algorithm~\ref{algo:recursion}.

\begin{algorithm}[t]
\tcp{All operation on indices $m, n$ are done in parallel}
\SetAlgoLined
\ParFor{$t=0, \ldots, T$}{
    $X^n_t \leftarrow \textrm{sample from } q_t(\dd{x_t})$\;
    \If{$t = 0$}{
    $\omega^n_t \leftarrow G_0(X^n_t) \frac{\bbM_0(\dd{x_t})}{q_t(\dd{x_t})}(X^n_t)$\;
        }
    \Else{
        $\omega^n_t \leftarrow \frac{\nu_t(\dd{x_t})}{q_t(\dd{x_t})}(X^n_t)$\;
    }
    $W_t^n \leftarrow w^n_t / \sum_{m=1}^N w_t^m$\;
    $L_t^N \leftarrow \frac{1}{N}\sum_{m=1}^N w_t^m$\;
}
\Return{$\ft{X^{1:N}}{0}{T}, \ft{W^{1:N}}{0}{T}, \ft{L^N}{0}{T}$}
\caption{Smoother initialization}
\label{algo:smoother-init}
\end{algorithm}

\begin{algorithm}[t]
\SetAlgoLined
\Fn{\textsc{Recursion}$\left(\ft{X^{1:N}}{a}{c-1}, \ft{w^{1:N}}{a}{c-1}, \ft{L^N}{a}{c-1}, \ft{X^{1:N}}{c}{b}, \ft{w^{1:N}}{c}{b}, \ft{L^N}{c}{b}\right)$}{
    \If{$a=c-1$ and $b=c$}
        {
        \Return{\textsc{Combine}$\left(\ft{X^{1:N}}{a}{a}, w^{1:N}_a, \ft{L^N}{a}{a}, \ft{X^{1:N}}{b}{b}, w^{1:N}_b, \ft{L^N}{b}{b}\right)$}
        }
    \ElseIf{$c-1 > a$ and $b=a$}
        { 
        $c' \leftarrow \floor{\frac{a + c-1}{2}}$\;
        $\ft{X^{1:N}}{a}{c-1}, \ft{L^N}{a}{c-1} \leftarrow$ \textsc{Recursion}$\left(\ft{X^{1:N}}{a}{c'-1}, \ft{w^{1:N}}{a}{c'-1}, \ft{L^N}{a}{c'-1}, \ft{X^{1:N}}{c'}{c-1}, \ft{w^{1:N}}{c'}{c-1}, \ft{L^N}{c'}{c-1}\right)$\;
        \Return{\textsc{Combine}$\left(\ft{X^{1:N}}{a}{c-1}, (1 / N)_{n=1}^N, \ft{L^N}{a}{c-1}, \ft{X^{1:N}}{b}{b}, w^{1:N}_b, \ft{L^N}{b}{b}\right)$}
        }
    \ElseIf{$a=c-1$ and $b > c$}
        { 
        $c' \leftarrow \floor{\frac{c + b}{2}}$\;
        $\ft{X^{1:N}}{c}{b}, \ft{L^N}{c}{b} \leftarrow$ \textsc{Recursion}$\left(\ft{X^{1:N}}{c}{c'-1}, \ft{w^{1:N}}{c}{c'-1}, \ft{L^N}{c}{c'}, \ft{X^{1:N}}{c'}{b}, \ft{w^{1:N}}{c'}{b}, \ft{L^N}{c'}{b}\right)$\;
        \Return{\textsc{Combine}$\left(\ft{X^{1:N}}{a}{a}, w_a^{1:N}, \ft{L^N}{a}{a}, \ft{X^{1:N}}{c}{b}, (1 / N)_{n=1}^N, \ft{L^N}{c}{b}\right)$}
        }
    \Else
        { 
        $c' \leftarrow \floor{\frac{a + c-1}{2}}$\;
        $\ft{X^{1:N}}{a}{c-1}, \ft{L^N}{a}{c-1} \leftarrow$ \textsc{Recursion}$\left(\ft{X^{1:N}}{a}{c'-1}, \ft{w^{1:N}}{a}{c'-1}, \ft{L^N}{a}{c'-1}, \ft{X^{1:N}}{c'}{c-1}, \ft{w^{1:N}}{c'}{c-1}, \ft{L^N}{c'}{c-1}\right)$\;
        $c' \leftarrow \floor{\frac{c + b}{2}}$\;
        $\ft{X^{1:N}}{c}{b}, \ft{L^N}{c}{b} \leftarrow$ \textsc{Recursion}$\left(\ft{X^{1:N}}{c}{c'-1}, \ft{w^{1:N}}{c}{c'-1}, \ft{L^N}{c}{c'}, \ft{X^{1:N}}{c'}{b}, \ft{w^{1:N}}{c'}{b}, \ft{L^N}{c'}{b}\right)$\;
        \Return{\textsc{Combine}$\left(\ft{X^{1:N}}{a}{c-1}, (1 / N)_{n=1}^N, \ft{L^N}{a}{c-1}, \ft{X^{1:N}}{c}{b}, (1 / N)_{n=1}^N, \ft{L^N}{c}{b}\right)$}
        }
    }
\caption{Recursion}
\label{algo:recursion}
\end{algorithm}

The smoothing algorithm then simply consists in passing the output of Algorithm~\ref{algo:smoother-init} to Algorithm~\ref{algo:recursion}.
It is worth noting that while Algorithm~\ref{algo:recursion} is correct, its recursive nature makes its implementation on parallel devices tedious if one wants to benefit from hardware acceleration. Moreover it does not consist in a tail recursion~\cite[see, e.g.][Ch. 15]{muchnick1997advanced}, so that it cannot easily be transformed into a loop that would be easier to parallelize. However, the split-combine operations can be reformulated as a series of tensor reshaping operations, which is more amenable to parallelization. We provide this equivalent, albeit parallelizable, formulation of the algorithm in Appendix~\ref{app:parallel-combination}. 

Consider now the choice of tree partitioning given in Figure~\ref{fig:smoothing-tree-example}. The nodes correspond to the combination operation, while the edges correspond to the split happening in Algorithm~\ref{algo:recursion}. All the operations at a given depth can be run fully in parallel, each of them being entirely parallelizable too with respect to the particle samples, except for the resampling operation. The resampling operation requires normalizing the weights and running parallel search operations, which can be done with span complexity \citep{Cormen:2009} of $\bigO(\log_2(N))$ on parallel architectures~\citep[using prefix-sum operations, see, e.g.]{Murray2016ParallelResampling}, so that each level of the tree has span complexity $\bigO(\log_2(N))$. This results in a parallelized algorithm run time that globally scales linearly with the depth of the smoothing tree considered, and logarithmically in the number of particles. As a consequence, we have the following proposition.

\begin{proposition}\label{prop:complexity}
     The total span complexity of dSMC is $\bigO(\log_2(T) \log_2(N))$. 
\end{proposition}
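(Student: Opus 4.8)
The plan is to combine two ingredients: (i) a bound on the depth of the tree produced by Algorithm~\ref{algo:recursion}, and (ii) a bound on the span complexity of a single level of the tree. For the first ingredient, I would observe that the midpoint-splitting rule $c' \leftarrow \floor{(a+b)/2}$ (and its variants in the other branches) halves the size of each interval at every recursion step, so the recursion applied to $\{0,1,\ldots,T\}$ bottoms out after $\bigO(\log_2 T)$ levels; this is the balanced-tree case already alluded to in Corollary~\ref{cor:convergence}, where $D = \ceil{\log_2 T}$. Crucially, all \textsc{Combine} calls occurring at the same depth of the tree act on disjoint sub-intervals and on independent particle systems, so they can be dispatched simultaneously; thus the span (critical-path length) is the number of levels times the span of the work done at one level, not the total work.

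For the second ingredient, I would inspect Algorithm~\ref{algo:block-combination} line by line and argue that every step has span $\bigO(\log_2 N)$. The computation of the unnormalized weights $\omega_c(X_{c-1}^m, X_c^n) w_{c-1}^m w_c^n$ over the $N^2$ index pairs $(m,n)$ is embarrassingly parallel (span $\bigO(1)$, given $N^2$ processors). The normalizing constant $L_c^N = \sum_{m,n} \omega_c(X_{c-1}^m,X_c^n) w_{c-1}^m w_c^n$ is a sum of $N^2$ terms, computable by a parallel reduction in span $\bigO(\log_2(N^2)) = \bigO(\log_2 N)$. Normalizing to get $W_c^{mn}$ is then span $\bigO(1)$. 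The resampling step — drawing $N$ pairs from the categorical distribution with weights $W_c^{mn}$ — is the only delicate one: using a prefix-sum (scan) over the flattened weight vector followed by $N$ parallel binary searches, as in \citet{Murray2016ParallelResampling}, this also runs in span $\bigO(\log_2 N)$. Finally, the two \texttt{ParFor} loops that copy $X_{c'}^{l_n}$ and $X_{c'}^{r_n}$ into $\widetilde{X}^n_{c'}$ run in parallel over both $n$ and $c'$, hence span $\bigO(1)$. So one \textsc{Combine} has span $\bigO(\log_2 N)$, and likewise Algorithm~\ref{algo:smoother-init} (a \texttt{ParFor} over $t$ with one reduction for $L_t^N$) has span $\bigO(\log_2 N)$.

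Putting the pieces together: the overall span is the span of the initialization plus the sum, over the $\bigO(\log_2 T)$ levels, of the per-level span $\bigO(\log_2 N)$, which telescopes to $\bigO(\log_2 N) + \bigO(\log_2 T)\cdot\bigO(\log_2 N) = \bigO(\log_2(T)\log_2(N))$, as claimed. I expect the main obstacle to be the resampling step: one must be careful that the $N^2$-vector of weights can be prefix-summed and searched within span $\bigO(\log_2 N)$ rather than $\bigO(\log_2 N^2)$ — which is fine since $\log_2 N^2 = 2\log_2 N = \bigO(\log_2 N)$ — and that systematic/stratified variants (mentioned in the algorithm's comment) admit the same span bound; this is exactly what the cited parallel-resampling literature provides, so rigorously this amounts to quoting \citet{Murray2016ParallelResampling} rather than re-deriving it. A secondary point worth stating explicitly is that the recursion tree in Algorithm~\ref{algo:recursion}, while written recursively, has the same depth and the same level-wise parallel structure as the tree in Figure~\ref{fig:smoothing-tree-example}, so the informal ``depth $\times$ per-level span'' accounting is legitimate.
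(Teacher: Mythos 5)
Your argument is correct and follows essentially the same route as the paper, which establishes the result in the paragraph preceding the proposition: all \textsc{Combine} operations at a given tree depth run in parallel, the per-level span is $\bigO(\log_2 N)$ because weight normalization and resampling are done via prefix-sum/parallel search (citing \citet{Murray2016ParallelResampling}), and the balanced tree has depth $\bigO(\log_2 T)$. Your line-by-line accounting of Algorithm~\ref{algo:block-combination} and of the initialization step simply makes explicit what the paper states more briefly.
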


\begin{remark}
    It is worth noting that some alternative resampling methods exist that allow to parallelize the resampling operation, at the cost of biasing it, or at the cost of random execution time~\citep{Murray2016ParallelResampling}. We discuss these methods and the additional benefits they provide for dSMC in Section~\ref{subsec:parallel-resampling}.
\end{remark}

\section{Parallel-in-time particle Gibbs}
\label{sec:pgibbs}
We now focus on deriving a conditional formulation of dSMC (that we call c-dSMC) that we then use to build a PIT particle Gibbs algorithm. We quickly discuss its degeneracy properties, and in particular the fact that it does not require a backward sampling step to help with mixing the initial time steps.

\subsection{Conditional dSMC sampler}
    \label{subsec:conditional-smc}
    Particle Gibbs methods were introduced in \citet{Andrieu2010particle} in order to  sample from the joint posterior $\ft{\bbQ}{0}{T}(\dd{\ft{x}{0}{T}}, \theta)$ of a  state-space model. It consists in successively applying two conditional sampling steps: (i) sampling $\theta'$ conditionally on a given smoothing trajectory $\ft{x^*}{0}{T}$, and (ii) sampling a smoothing trajectory $\ft{x'}{0}{T}$ conditionally on $\ft{x^*}{0}{T}$. Step (ii) needs to be understood as ``conditionally to one of the trajectories being sampled by the SMC algorithm being $\ft{x^*}{0}{T}$''. 

Due to the arbitrary tree representation of the smoothing operation, it is complicated to manipulate the complete expression for the distribution of all the random variables generated during the course of the algorithm\footnote{Although this was used for instance in \cite{Lindsten2017Divideandconquer} to prove the unbiasedness of their resulting likelihood estimate.}. However, we can still easily provide a natural recursive expression that will serve as a support for understanding the behavior of the conditional distributions. In order to make notations simpler, we write $\ft{\sigma}{a}{c-1}(k)$, $k = a, a+1, \ldots c-1$ for the resampling array applied to node $k$, and we write $\ft{\sigma^n}{a}{c-1}(k)$ for its $n$-th element (and similarly for $\ft{\sigma}{c}{b}(k)$).
    
\begin{remark}
    $\ft{\sigma}{a}{c-1}$ is a function of the left-right resampling indices $\ft{l^{1:N}}{a+1}{c-1}, \ft{r^{1:N}}{a+1}{c-1}$ generated deeper in the recursion tree, and similarly for $\ft{\sigma}{c}{b}$ via the recursion 
    \begin{align*}
        \ft{\sigma^m}{a}{b}(k) = \ft{\sigma^{l_c^m}}{a}{c-1}(k), \quad \text{for all $a \leq k < c$}\\
        \ft{\sigma^m}{a}{b}(k) = \ft{\sigma^{r_c^m}}{c}{b}(k), \quad \text{for all $c \leq k \leq b$},
    \end{align*}
    for all $m \in \{1, 2, \ldots, N\}$.
\end{remark}
    
For $a<c<b$, let $\ft{\psi}{a}{c-1}(\dd{\ft{x^{1:N}}{a}{c-1}}, \ft{l^{1:N}}{a+1}{c-1}, \ft{r^{1:N}}{a+1}{c-1})$ and $\ft{\psi}{c}{b}(\dd{\ft{x^{1:N}}{c}{b}}, \ft{l^{1:N}}{c+1}{b}, \ft{r^{1:N}}{c+1}{b})$ be the full distributions of all the random variables generated by dSMC for the partial smoothing distributions $\ft{\bbQ}{a}{c-1}$ and $\ft{\bbQ}{c}{b}$, respectively.
The full distribution of all the random variables generated to sample from $\ft{\bbQ}{a}{b}$ is given by
\begin{align}
    \ft{\psi}{a}{b}(\ft{\dd{x}^{1:N}}{a}{b}, \ft{l^{1:N}}{a+1}{b}, \ft{r^{1:N}}{a+1}{b}) 
        &=\ft{\psi}{a}{c-1}(\ft{\dd{x}^{1:N}}{a}{c-1}, \ft{l^{1:N}}{a+1}{c-1}, \ft{r^{1:N}}{a+1}{c-1}) \\
        &\quad\times \ft{\psi}{c}{b}(\ft{\dd{x}^{1:N}}{c}{b}, \ft{l^{1:N}}{c+1}{b}, \ft{r^{1:N}}{c+1}{b}) \\
        &\quad\times \left\{\prod_{n=1}^N W_c^{l_c^n, r_c^n}\right\}.
\end{align}
where for all $m, n \in \{1, \ldots, N\}$, we have
\begin{align}
    W_c^{m,n} \propto \omega_c\left(x_{c-1}^{\ft{\sigma^m}{a}{c-1}(c-1)}, x_c^{\ft{\sigma^n}{c}{b}(c)}\right)
\end{align}
so that $\sum\limits_{m,n=1}^N W_c^{m,n} = 1$.
This also provides the marginal distribution (note how $l_c$ and $r_c$ are not present in the arguments of $\psi$)
\begin{align}
    \ft{\psi}{a}{b}(\ft{\dd{x}^{1:N}}{a}{b}, \ft{l^{1:N}}{a+1}{c-1}, \ft{l^{1:N}}{c+1}{b}, \ft{r^{1:N}}{a+1}{c-1}, \ft{r^{1:N}}{c+1}{b}) 
        &=      \ft{\psi}{a}{c-1}(\ft{\dd{x}^{1:N}}{a}{c-1}, \ft{l^{1:N}}{a+1}{c-1}, \ft{r^{1:N}}{a+1}{c-1}) \\
        &\quad\times \ft{\psi}{c}{b}(\ft{\dd{x}^{1:N}}{c}{b}, \ft{l^{1:N}}{c+1}{b}, \ft{r^{1:N}}{c+1}{b}).
\end{align}
Similarly, the related estimate of the normalizing constant 
\begin{align}
    \ft{L}{a}{b}^N = \ft{L}{a}{b}^N(\ft{x}{a}{b}^{1:N}, \ft{l^{1:N}}{a+1}{c-1}, \ft{l^{1:N}}{c+1}{b}, \ft{r^{1:N}}{a+1}{c-1}, \ft{r^{1:N}}{c+1}{b})
\end{align}
follows the recursion
\begin{align}
    \ft{L}{a}{b}^N = \ft{L}{a}{c-1}^N \ft{L}{c}{b}^N \left\{\frac{1}{N^2} \sum_{i,j=1}^N \omega_c\left(x_{c-1}^{\ft{\sigma^i}{a}{c-1}(c-1)}, x_c^{\ft{\sigma^j}{c}{b}(c)}\right)\right\}\label{eq:rec-loglik}.
\end{align}
Putting these together allows us to characterize recursively the invariant distribution of the PMMH, that we will then use in order to express the related conditional dSMC distribution
\begin{align}
    \ft{\pi}{a}{b}(\ft{\dd{x}^{1:N}}{a}{b}, \ft{l^{1:N}}{a+1}{c-1}, \ft{l^{1:N}}{c+1}{b}, \ft{r^{1:N}}{a+1}{c-1}, \ft{r^{1:N}}{c+1}{b}) 
        = \frac{\ft{L}{a}{b}^N}{\ft{L}{a}{b}} \, \ft{\psi}{a}{b}(\ft{\dd{x}^{1:N}}{a}{b}, \ft{l^{1:N}}{a+1}{c-1}, \ft{l^{1:N}}{c+1}{b}, \ft{r^{1:N}}{a+1}{c-1}, \ft{r^{1:N}}{c+1}{b}).
\end{align}
Under these notations, we can define the star trajectory recursively as $\ft{\sigma^*}{a}{c-1} = \ft{\sigma^I}{a}{c-1}$, $\ft{\sigma^*}{c}{b} = \ft{\sigma^J}{c}{b}$, where $(I,J)$ is distributed according to a multinomial distribution on $W^{m,n}_c$, and $\ft{\sigma^*}{a}{b} = \left[\ft{\sigma^*}{a}{c-1}, \ft{\sigma^*}{c}{b}\right]$. This in turn defines $\ft{X^*}{a}{b} \coloneqq \ft{X^{\ft{\sigma^*}{a}{b}}}{a}{b} = \left[\ft{X^{*}}{a}{c-1}, \ft{X^*}{c}{b}\right]$, and the related $l_c^*, r_c^*$ which correspond to the resampling indices pairs that eventually lead to the star trajectory.
\begin{proposition}[Conditional dSMC]
    $\ft{X^*}{a}{b}$ is marginally distributed according to $\ft{\bbQ}{a}{b}$.
\end{proposition}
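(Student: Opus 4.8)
The plan is to establish the stronger, ``unnormalised'' identity
\begin{align*}
    \bbE_{\ft{\psi}{a}{b}}\!\left[\ft{L^N}{a}{b}\,\varphi\!\left(\ft{X^*}{a}{b}\right)\right] = \ft{L}{a}{b}\,\ft{\bbQ}{a}{b}(\varphi)
\end{align*}
for every bounded measurable $\varphi$, which yields the proposition: taking $\varphi \equiv 1$ gives $\bbE_{\ft{\psi}{a}{b}}[\ft{L^N}{a}{b}] = \ft{L}{a}{b}$, so $\ft{\pi}{a}{b}$ is a genuine probability measure, and the identity then reads $\bbE_{\ft{\pi}{a}{b}}[\varphi(\ft{X^*}{a}{b})] = \ft{\bbQ}{a}{b}(\varphi)$, i.e. $\ft{X^*}{a}{b} \sim \ft{\bbQ}{a}{b}$ under $\ft{\pi}{a}{b}$ together with the draw $(I,J) \sim W_c$. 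I would prove the identity by induction on the subtree of Figure~\ref{fig:smoothing-tree-example} rooted at $\ft{\bbQ}{a}{b}$, assuming the index pairs $(l_n, r_n)$ are sampled by multinomial resampling, so that conditionally on the current particle system resampling is exchangeable and unbiased.

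\emph{Base case.} For a leaf $\ft{\bbQ}{c}{c} = \nu_c$ one has $\ft{L^N}{c}{c} = \tfrac1N \sum_m \omega_c^m$ and selection weights $W_c^n \propto \omega_c^n$, whence $\ft{L^N}{c}{c} W_c^n = \omega_c^n / N$ and $\bbE[\ft{L^N}{c}{c} \sum_n W_c^n \varphi(X_c^n)] = \bbE_{q_c}[\omega_c(X_c)\varphi(X_c)]$. For $c \geq 1$ this equals $\int \tfrac{\nu_c(\dd x)}{q_c(\dd x)}\,\varphi(x)\,q_c(\dd x) = \nu_c(\varphi) = \ft{L}{c}{c}\,\ft{\bbQ}{c}{c}(\varphi)$ by Definition~\ref{def:twisted-model}; for $c = 0$ the same computation applies after substituting $\omega_0 = G_0 \bbM_0 / q_0$ and using $\nu_0 \propto h_0 \bbP_0$ from Proposition~\ref{prop:reweighting}, the proportionality constant being absorbed into $\ft{L^N}{0}{0}$.

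\emph{Inductive step.} Suppose $\ft{\bbQ}{a}{b}$ is produced by \textsc{Combine} from two independently processed children spanning $[a, c-1]$ and $[c, b]$, so that $\ft{\psi}{a}{b}$ factorises as the product of the children's laws and the draw of $(l_c^{1:N}, r_c^{1:N})$, while $\ft{L^N}{a}{b} = \ft{L^N}{a}{c-1}\ft{L^N}{c}{b} L_c^N$ with $W_c^{ij} = \hat w^i \hat w^j\, \omega_c(\hat X^i_{c-1}, \hat X^j_c) / L_c^N$; here $(\hat w^i, \hat X^i_{c-1})$ and $(\hat w^j, \hat X^j_c)$ are the weighted endpoints handed to \textsc{Combine} (uniform weights for an internal child, the importance weights of Algorithm~\ref{algo:smoother-init} for a leaf child), and $\ft{X^*}{a}{b} = [\ft{X^{\sigma^I}}{a}{c-1}, \ft{X^{\sigma^J}}{c}{b}]$ with $(I,J) \sim W_c^{\cdot\cdot}$. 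The factor $L_c^N$ cancels against the denominator of $W_c^{ij}$, so
\begin{align*}
    \bbE_{\ft{\psi}{a}{b}}\!\left[\ft{L^N}{a}{b}\,\varphi(\ft{X^*}{a}{b})\right] = \bbE\!\left[\ft{L^N}{a}{c-1}\ft{L^N}{c}{b}\sum_{i,j=1}^N \hat w^i \hat w^j\, \omega_c(\hat X^i_{c-1}, \hat X^j_c)\, \varphi\!\left(\big[\ft{X^{\sigma^i}}{a}{c-1}, \ft{X^{\sigma^j}}{c}{b}\big]\right)\right].
\end{align*}
I would then (i) use the independence of the two children and the exchangeability of the particles within each to replace the double sum by a single representative summand; (ii) for an internal child, use conditional unbiasedness of multinomial resampling to turn its equally weighted resampled trajectories back into the pre-resampling product-form empirical measure controlled by the induction hypothesis (for a leaf child this step is the base case, with the importance weights); and (iii) apply the induction hypothesis twice, first to the right child with the bounded test function $\ft{x}{c}{b} \mapsto \omega_c(x_{c-1}, x_c)\,\varphi([\ft{x}{a}{c-1}, \ft{x}{c}{b}])$ and then, with the resulting bounded function of $\ft{x}{a}{c-1}$, to the left child, Fubini being justified by the independence. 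This turns the right-hand side into $\ft{L}{a}{c-1}\ft{L}{c}{b}\iint \omega_c(x_{c-1}, x_c)\,\ft{\bbQ}{a}{c-1}(\dd{\ft{x}{a}{c-1}})\,\ft{\bbQ}{c}{b}(\dd{\ft{x}{c}{b}})\,\varphi(\ft{x}{a}{b})$, which by Proposition~\ref{prop:stitching} (equations \eqref{eq:stitching}--\eqref{eq:Lab}) equals $\ft{L}{a}{b}\,\ft{\bbQ}{a}{b}(\varphi)$. The root case $a = 0$, $b = T$ is the statement of interest, and combined with Proposition~\ref{prop:reweighting} it says that the selected trajectory of dSMC follows the true smoothing distribution.

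\emph{Main obstacle.} The delicate point is the interface between a node and its parent: the induction hypothesis naturally describes the pre-resampling product-form estimator $\ft{\widetilde{\bbQ}^N}{a}{b}$, whereas \textsc{Combine} at the parent consumes the resampled, equally weighted output (and, at the leaves, the importance-weighted particles of Algorithm~\ref{algo:smoother-init}). Making step (ii) rigorous requires conditioning on the $\sigma$-algebra with respect to which $L_c^N$ and the weights $\hat w^i$ are measurable and then invoking unbiasedness of multinomial resampling at exactly the right moment; this is also where the (multinomial, or more generally unbiased and exchangeable) resampling hypothesis is used. A secondary subtlety is the $t = 0$ node, where $\nu_0$ is only specified up to a constant and $\ft{L^N}{0}{0}$ estimates $\int h_0\,\dd{\bbP_0}$ rather than $1$; one must check that this constant threads through the recursion so that the root identity reproduces $\ft{\bbQ}{0}{T}$ with its correct normalising constant $L_T$.
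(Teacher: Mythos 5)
Your proof is correct in substance, but it reaches the result by a genuinely different route than the paper. The paper works directly with the extended target $\ft{\pi}{a}{b} = \tfrac{\ft{L}{a}{b}^N}{\ft{L}{a}{b}}\,\ft{\psi}{a}{b}$: expanding $\ft{L}{a}{b}^N$ and $\ft{\psi}{a}{b}$ recursively, the normalizing sums $\sum_{i,j}\omega_c(\cdot,\cdot)$ in the resampling weights cancel against the corresponding factors of $\ft{L}{a}{b}^N$, and marginalizing out every trajectory other than the star one leaves $\ft{\pi}{a}{b}(\dd{\ft{x}{a}{b}^*}) \propto \prod_c q_c(\dd{x_c^*})\prod_c \omega_c(x_{c-1}^*,x_c^*)$, which is the target. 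You instead prove the unbiasedness identity $\bbE_{\ft{\psi}{a}{b}}\bigl[\ft{L}{a}{b}^N\,\varphi(\ft{X^*}{a}{b})\bigr] = \ft{L}{a}{b}\,\ft{\bbQ}{a}{b}(\varphi)$ by induction over the tree, using conditional unbiasedness of multinomial resampling at each child--parent interface, independence of the two subtrees, and the stitching identity \eqref{eq:Lab}, and then convert to the statement about $\ft{\pi}{a}{b}$ by the change of measure; this is the same claim the paper proves (you correctly read the proposition as a statement under $\ft{\pi}{a}{b}$, not under the sampling law $\ft{\psi}{a}{b}$). What your route buys is a self-contained proof of $\bbE_{\ft{\psi}{a}{b}}[\ft{L}{a}{b}^N]=\ft{L}{a}{b}$, which the paper only invokes through a footnote and a citation to divide-and-conquer SMC; what it costs is the bookkeeping you yourself identify in step (ii), which the paper's density-cancellation argument avoids entirely. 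Two small caveats: your induction hypothesis is stated for bounded $\varphi$, but in step (iii) you feed it the composite function $\omega_c(x_{c-1},x_c)\varphi(\cdot)$, which is bounded only if $\omega_c$ is --- either add that standing assumption (as in Proposition~\ref{prop:consistency}) or run the induction for nonnegative measurable functions via Tonelli; and steps (i)--(iii) are sketched rather than executed, though the conditioning structure you describe (condition on the $\sigma$-algebra generated by the pre-resampling variables, under which $\ft{L}{c}{b}^N$ and the weights are measurable, then integrate out the child's final resampling) is exactly the right way to make them rigorous.
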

\begin{proof}
    In order to show this, we only need to prove that the star trajectory is marginally distributed according to the smoothing distribution $\ft{\pi}{a}{b}$. Thanks to the recursive definition of $\ft{L^N}{a}{b}$ and $\ft{\psi}{a}{b}$, we can see that the normalizing term $\sum_{i,j=1}^N \omega_c\left(x_{c-1}^{\ft{\sigma^i}{a}{c-1}(c-1)}, x_c^{\ft{\sigma^j}{c}{b}(c)}\right)$ disappears in the definition of $\ft{\pi}{a}{b}$. This ensures that the resulting distribution involves only products of $\omega_c$ evaluated at particles belonging to other trajectories than the star-trajectory. In particular, marginalizing trajectories other than the star-trajectory consists in isolating its factors in $\ft{\pi}{a}{b}$. This results in
    \begin{align}
        \ft{\pi}{a}{b}(\ft{\dd{x}^*}{a}{b}) \propto \prod_{c=a}^b q_c(\dd{x_c^*}) \prod_{c=a}^b \omega_c(x_{c-1}^*, x_c^*)
    \end{align}
    which is exactly the target smoothing distribution. 
\end{proof}
\begin{remark}
    This construction also generalizes to the algorithm of \citet{Lindsten2017Divideandconquer} provided that one is able to sample conditional distributions at each node of their construction, for example using the sequential method of \citet{Lindsten2014AncestorSampling}.
\end{remark}
\subsection{Parallel-in-time particle Gibbs}
    \label{subsec:pgibbs}
    The resulting algorithm resembles the classical conditional SMC algorithm of \cite{Andrieu2010particle}, in that, similarly, we can implement it by simply enforcing that the first trajectory be preserved throughout the course of the recursion. In particular, only Algorithms \ref{algo:block-combination} and \ref{algo:smoother-init} need to be modified. The conditional version of Algorithm \ref{algo:smoother-init} simply consists in prepending the star trajectory to the sampled proposal trajectories before computing the resulting weights. On the other hand the conditional version of Algorithm~\ref{algo:block-combination} consists in preserving said star trajectory throughout the resampling steps and is given by Algorithm~\ref{algo:conditional-block-combination}.

\begin{algorithm}
\tcp{All operations on indices $m, n$ are done in parallel}
\SetAlgoLined
\KwResult{Combine conditional particle representation of partial smoothing distributions}
\Fn{\textsc{ConditionalCombine}$\left(\ft{X^{1:N}}{a}{c-1}, w^{1:N}_{c-1}, \ft{X^{1:N}}{c}{b}, w^{1:N}_{c}\right)$}
{
$W_{mn} \leftarrow \omega_{c}(X_{c-1}^m, X_{c}^n) w^m_{c-1} w^n_c$\;
Sample independently $N-1$ times from $\sum_{m,n}W_{mn} \delta_{X_{c}^m}(\dd{x_{c}}) \delta_{X_{c-1}^n}(\dd{x_{c-1}})$ to get $(X_{c-1}^{l_n}, X_{c}^{r_n})_{1 \leq n \leq N}$\ \tcp{in parallel, using Multinomial}\;
\tcp{The two loops below can be done in parallel}
\ParFor{$c'=a, \ldots, c-1$}{
    $\widetilde{X}^{n}_{c'} \leftarrow X_{c'}^{1}$ \;
    $\widetilde{X}^{n}_{c'} \leftarrow X_{c'}^{l_n}$ \;
}
\ParFor{$c'=c, \ldots, b$}{
    $\widetilde{X}^{n}_{c'} \leftarrow X_{c'}^{1}$ \;
    $\widetilde{X}^{n}_{c'} \leftarrow X_{c'}^{r_n}$ \;
}
\Return{$\ft{\widetilde{X}^{1:N}}{a}{b}$}
}
\caption{Conditional Block combination}
\label{algo:conditional-block-combination}
\end{algorithm}

\cite{Andrieu2010particle}  considered implementing the conditional SMC
step using a particle filter only, which resulted in lower mixing speeds for
time steps further away from the last time step $T$. This was corrected by the
introduction of the so-called backward sampling step~\citep{Whiteley_disc_PMCMC, Lindsten2012}, which
enabled rejuvenating the conditional trajectories; see also \cite{Lindsten2014AncestorSampling} for a related approach. A noteworthy point is that
our proposed PIT particle Gibbs algorithm does not suffer from the classical
genealogy degeneracy problem that prompted the development of the ancestor sampling
step. This is due to the fact that the degeneracy arising in
dSMC is essentially uniform across all time steps thanks to the balanced tree
structure. Indeed, instead of the last time steps being resampled just a few
times and the initial time steps being resampled around $T$ times, as in
standard SMC, all time steps in dSMC are resampled at most $\ceil{\log_2(T)}$
times. This is also the reason why the $\mathcal{L}_p$ error in
Proposition~\ref{prop:consistency} scales as a polynomial of $T$ and not
exponentially. In practice, this means that the modified trajectories sampled
from our conditional dSMC will mix similarly for initial timesteps and for
final ones, provided that our proposal distributions $q_t$ and auxiliary weight
functions $\nu_t$ are adapted to the model and data at hand.

\section{Variance reduction methods}
\label{sec:proposals}
A drawback of our method consists in the necessity to use independent proposals $q_{0:T}(\dd{x_{0:T}}) = \prod_{t=0}^T q_{t}(\dd{x_{t}})$. It is well known that using such rough estimates increases the variance of the smoothing distribution estimates in particular in case of ``sticky'' processes which exhibit a strong time-dependency, or more precisely, when the conditional reverse Markov chain representing the smoothing distribution mixes slowly. However, this problem can be mitigated by using proposal distributions that are adapted to the model at hand. In Section~\ref{subsec:pit-gaussian} we describe how recently developed parallel-in-time Gaussian approximation based smoothing algorithms~\citep{Sarkka2021ParallelKF,Yaghoobi2021ParallelEKF} can be used to form such proposal. As these methods are also parallel in time, they do not relinquish the $\bigO(\log(T))$ span complexity of the dSMC algorithm. 

More prosaically, a natural way to reduce the variance of the smoothing estimators is to increase the number of particles used in the \MC{} representations. However, doing so in Algorithm~\ref{subsec:algorithm} comes at a quadratic cost in memory and threads utilization. In Section~\ref{subsec:parallel-resampling} we discuss how we can leverage ideas from \citet{Murray2016ParallelResampling} to lazily resample so as to keep a linear memory cost and reduce the computational burden.

\subsection{Parallel-in-time Gaussian approximated smoothing solutions}
    \label{subsec:pit-gaussian}
    It is well known that non-linear SSMs for which the state posterior distribution is unimodal can be approximated by LGSSMs. For example, consider an additive Gaussian noise transition model $p_t(x_t \mid x_{t-1}) = \mathcal{N}(x_t; f(x_{t-1}), Q_{t-1}) \, \dd{x_t}$. Under the Gaussian approximated assumption $p(x_t \mid y_{1:t}) \approx \mathcal{N}(x_t; m_t, P_t)$, we can use a Taylor linearization of the transition function $f$ around the approximated mean $m_t$ to form the linearized dynamics $x_{t+1} = f(m_t) + J[f](m_t)(x_t -m_t) + \epsilon_{t}$, where  $\epsilon_{t}$ is a Gaussian random variable with mean $0$ and covariance $J[f](m_t) Q_{t} J[f](m_t)^\top$ and $J[f](m_t)$ is the Jacobian of $f$ evaluated at $m_t$. By repeating this approximation for each time step and for the observation model, we obtain the extended Kalman filter algorithm \citep{Jazwinski:1970}. Similarly, one can use Taylor expansion in order to compute Gaussian approximations of the smoothing distribution marginals $p(x_t \mid y_{1:T})$ for all $t$, yielding the extended Kalman smoother algorithm. Other linearization techniques exist, such as statistical linearization \citep{Gelb1974appliedestimation}, sigma-point (unscented) methods \citep{Julier+Uhlmann+Durrant-Whyte:2000,Sarkka:2008}, and numerical integration based methods~\citep{Ito+Xiong:2000,Sarkka+Hartikainen:2010a}. For a review, we refer the reader to \citet{Sarkka2013Book}.

In practice it is worth noting that the reference point used to linearize the system at time $t$ ($m_t$ for the extended Kalman filter example above) is arbitrary, and could be optimized instead of taking the result of the previous time step. This remark led to development of iterated extended Kalman filters \citep{Bell:1993}, iterated sigma-point filters \citep{Sibley2006iteratedukf,zhan2007iterated}, and general iterated statistical linear regression methods called posterior linearization filters \citep{Garcia:2016}. When considering smoothing problems, it is even better to iteratively linearize with respect to the smoothing trajectory as is done in the iterated extended Kalman smoother \citep{Bell:1994}. A general framework of iterated posterior linearization smoothers using this idea was developed in \citet{Garcia:2017} and this was further generalized to more general state-space models in \citet{Tronarp:2018}. These methods result in Gaussian approximations to the marginals $p(x_t \mid y_{1:T}) \approx \mathcal{N}(x_t; m_t^l, P_t^l)$ which are optimal in a Kullback--Leibler sense \citep{Garcia:2016}.

Recently, \citet{Sarkka2021ParallelKF} showed that by reformulating Bayesian filters and smoothers (including Kalman filters and smoothers) in terms of associative operators, it is possible to parallelize them along the time dimension by leveraging prefix-sum algorithms \citep{Blelloch:1989}. This leads to logarithmic span-time complexity $\log(T)$ instead of the conventional $\bigO(T)$ of sequential methods. \citet{Yaghoobi2021ParallelEKF} then extended this framework to non-linear models by developing parallelized versions of the iterated extended Kalman smoothers as well as the more general iterated posterior linearization smoothers. This framework allows for computing the marginal approximations $p(x_t \mid y_{1:T}) \approx \mathcal{N}(x_t; m_t^l, P_t^l)$ in the $\bigO(\log T)$ time complexity.

These Gaussian approximations to the smoothing distributions can now be used as proposal distributions $q_t$ and/or weighting distributions $\nu_t$ in the proposed dSMC algorithm. The resulting method with $q_t = \nu_t$ is summarized in Algorithm~\ref{algo:pit-gaussian-proposal}. 

\begin{algorithm}
\SetAlgoLined
\Fn{\textsc{LinearizedSmoother}$\left(y_{1:T}\right)$}
{
\ParFor{$t=0, \ldots, T$}{
    Initialize $q_t^0 = \mathcal{N}(m_t^0, P_t^0)$ \tcp{for example using the stationary distribution} \;
}
\While{criterion not verified}{
    Linearize \eqref{eq:ssm} around $q^l_t, t=0, 1, \ldots, T$ \tcp{Done in parallel} \;
    Run parallel Kalman filter and RTS smoothers on the linearized system as per \cite{Sarkka2021ParallelKF,Yaghoobi2021ParallelEKF}\;
    Set $p(x_t \mid y_{1:T}) \approx q^l_t = \mathcal{N}(m^l_t, P^l_t),  t=0, 1, \ldots, T$\;
}
Run the parallel smoother defined as per Algorithms~\ref{algo:smoother-init} and \ref{algo:block-combination}\;
}
\caption{PIT linearized proposal smoother}
\label{algo:pit-gaussian-proposal}
\end{algorithm}

Similarly, we can tweak Algorithm~\ref{algo:pit-gaussian-proposal} in order to define an efficient Gaussian proposal model for PIT pGibbs. Indeed, between two iterations of the d-cSMC described in Section~\ref{subsec:conditional-smc}, pGibbs typically proposes new parameters. We can expect that the parameters of the state-space model to not have changed too much. Intuitively, this means that the optimum trajectory for the parallel IPLS method will not change much and we can therefore reuse the optimum of the previous Gibbs iteration as initialization for the next one. The benefit of doing so is shown in the experiment of Section~\ref{subsec:exp_gibbs}.

\subsection{Parallel resampling for lazy evaluation of the weight matrix}
    \label{subsec:parallel-resampling}
    Algorithm~\ref{algo:block-combination} presented in Section~\ref{subsec:algorithm} requires to form a $N\times N$ matrix to then sample $N$ elements from it. Doing so limits the scalability of dSMC in at least two ways.

\begin{enumerate}
    \item The memory cost will increase quadratically with the required number of particles. This is particularly problematic on parallel hardware such as GPUs where the memory available is usually more limited than the main (random-access memory) memory accessible via a CPU. For a large number of time steps or particles, our algorithm may therefore simply fail to return a result.
    \item The number of threads available on GPUs, while increasing year-on-year, is still limited, and our algorithm computational scalability, although theoretically logarithmic in both $N$ and $T$, may be affected by threading bottlenecks. See Section~\ref{subsec:ffbs_comp} for an illustration of this. 
\end{enumerate}

In order to mitigate both these issues, we can leverage the parallel resampling schemes proposed by \citet{Murray2016ParallelResampling}. Indeed, these can be modified in order to sample $N$ entries from a set of $N \times N$ unnormalized weights \emph{without needing to evaluate the whole matrix}. This property, although not discussed in \citet{Murray2016ParallelResampling} can crucially be utilized to design lazy resampling schemes for our $N \times N$ size importance density~\eqref{eq:estimator}. 
Formally, suppose we want to sample $N$ pairs $(I_m, J_m)$ independently from a Multinomial distribution $\mathrm{Mul}((W_c^{i,j})_{i, j=1}^N)$, where for all $i, j$, $W_c^{i,j} \propto \omega_{c}(X^i_{c-1}, X^j_{c})$ for some time index $c$. This can be done in parallel across the $N$ pairs $(I_m, J_m)$ by considering $N$ independent instances of a Metropolis-Hastings~\citep[Code 2 in][]{Murray2016ParallelResampling} algorithm with proposal (after proper flattening of the $N \times N$ matrix) $\mathcal{U}(\{1, \ldots, N^2\})$ 
and target $\propto \omega_{c}(X^i_{c-1}, X^j_{c})$. Similarly, when an upper bound $\overline{\omega}_c$ to $\omega_c$ is available, an unbiased rejection sampling equivalent~\citep[Code 3 in][]{Murray2016ParallelResampling} can be implemented. Under this perspective, we only need to evaluate the term $\omega_{c}(X^i_{c-1}, X^j_{c})$ for the proposed pairs $(i, j)$. This allows us to never increase the memory and thread utilization beyond $\bigO(N)$ operations at any point in time.  On this other hand, this also means that we may inefficiently re-evaluate the same pair several times. However, as shown in Section~\ref{subsec:lazy-experiment}, the parallelization makes this trade-off beneficial. For the sake of completeness, we reproduce the resulting resampling algorithms in Appendix~\ref{app:lazy-resampling}.

Finally, while using these lazy resampling schemes comes at a price (biasedness in the case of the Metropolis-Hastings variation and random execution time in the case of the rejection sampling one), as discussed in \citet[Sections 3.2 and 3.4]{Murray2016ParallelResampling}, this trade-off becomes better as the variance of the weights $\omega_{c}(X^i_{c-1}, X^j_{c})$ decreases.

\section{Experiments}
\label{sec:experiments}
In order to illustrate the computational and statistical properties of our proposed methods, we now consider a set of examples from the literature and compare with the sequential counterparts of our methods. All the results were obtained using an Nvidia\textsuperscript{\textregistered} GeForce RTX 3090 GPU with 24GB memory and the code to reproduce them can be found at \url{https://github.com/AdrienCorenflos/parallel-ps}.
\subsection{Comparison with FFBS}
    \label{subsec:ffbs_comp}
    
In this section, we compare dSMC to the classical forward filtering backward sampling (FFBS) algorithm \citep{godsill2004monte}, both in terms of execution time and Monte Carlo error.
To make the comparison fairer, we also implement FFBS on GPU; in this way, FFBS  scale as $\bigO(T \log(N))$~\citep[see, e.g., the prefix-sum implementation of classical resampling operations in][]{Murray2016ParallelResampling}, since  the particle operations are parallelisable up to a logarithmic factor (corresponding to computing the sum of the importance weights, which can be done using a prefix-sum algorithm). 
We consider the same model as in \citet{Chopin2015particle} (which is a simplified version of the model in  \citet{yu2011center} for photon emission):
\begin{align}
    x_0 &\sim \mathcal{N}\left(\mu, \frac{\sigma^2}{1 - \rho^2}\right),\\
    x_t &= \mu + \rho(\lambda x_{t-1} - \mu) + \epsilon_{t-1}, \quad \epsilon_{t-1} \sim \mathcal{N}(0, \sigma^2), \quad t \geq 1,\\
    y_t &\sim \mathcal{P}(\exp(x_t)),
\end{align}
where $\mathcal{P}(\exp(x_t))$ denotes a Poisson distribution with rate $\exp(x_t)$, and we want to estimate its Fisher score with respect to $\sigma^2$, and evaluated at $\sigma^2$:
\begin{align}
    &\bbE\left[\nabla_{\sigma^2} \ln p(X_{0:T}, y_{0:T}) \mid y_{0:T}\right]\\
    &=\bbE\left[-\frac{T+1}{2 \sigma^2} + \frac{1 - \rho^2}{2 \sigma^4}(X_0 - \mu)^2 + \frac{1}{2\sigma^4}\sum_{s=1}^T \left\{X_s - \mu - \rho(X_{s-1} - \mu)\right\}^2 \mid y_{0:T} \right].
\end{align} 
Because of its additive nature, the variance of this expectation should increase as $T$ increases, making it a good benchmark function to test our algorithm.

The stationary distribution of the underlying dynamics is $x_t \sim \mathcal{N}(\mu, \sigma^2/ (1 - \rho^2))$, so we take $q_t = \nu_t = \mathcal{N}(\mu, \sigma^2/ (1 - \rho^2))$ for all $t$.

In order to study the statistical and numerical properties of our algorithm we then generate data $x_{0:T}, y_{0:T}$ from the model for $T=32, 64, 128, 256, 512$ and repeat $100$ dSMC and FFBS smoothing experiments on the same data.

\begin{figure}[!htb]
    \centering
    \resizebox{\linewidth}{!}{
    \begin{tikzpicture}
\pgfplotsset{every tick label/.append style={font=\tiny}}

\pgfplotsset{
    legend image with text/.style={
        legend image code/.code={%
            \node[anchor=center] at (0.3cm,0cm) {#1};
        }
    },
}

\begin{axis}[
    xlabel shift = -3 pt,
    ylabel shift = -3 pt,
    height = 6cm,
    width = 10cm,
    legend style={nodes={scale=0.5, transform shape}},
    legend pos=north west,
    legend columns=2, %
    log basis x={2},
    log basis y={10},
    tick align=outside,
    xlabel={\scriptsize $T$},
    xmode=log,
    ylabel={\scriptsize Runtime (s)},
    ymode=log,
]

\addlegendimage{legend image with text=dSMC}
\addlegendentry{}
\addlegendimage{legend image with text=FFBS}
\addlegendentry{}

\addplot [thick, color0, mark=*, mark size=2, mark options={solid,draw=white}] 
    table [x=T,y expr={ifthenelse(\thisrow{N}==25,\thisrow{dSMC},nan)}, col sep=comma]
    {sections/experiments/data/cox-results.csv};
\addlegendentry{}
\addplot [thick, color0, mark=square*, mark size=2, mark options={solid,draw=white}] 
    table[x=T,y expr={ifthenelse(\thisrow{N}==25,\thisrow{FFBS},nan)}, col sep=comma]
    {sections/experiments/data/cox-results.csv};
\addlegendentry{$N = 25$}  
\addplot [thick, color1, mark=*, mark size=2, mark options={solid,draw=white}] 
    table [x=T,y expr={ifthenelse(\thisrow{N}==250,\thisrow{dSMC},nan)}, col sep=comma]
    {sections/experiments/data/cox-results.csv};
\addlegendentry{}
\addplot [thick, color1, mark=square*, mark size=2, mark options={solid,draw=white}] 
    table[x=T,y expr={ifthenelse(\thisrow{N}==250,\thisrow{FFBS},nan)}, col sep=comma]
    {sections/experiments/data/cox-results.csv};
\addlegendentry{$N = 50$}  
\addplot [thick, color2, mark=*, mark size=2, mark options={solid,draw=white}] 
    table [x=T,y expr={ifthenelse(\thisrow{N}==100,\thisrow{dSMC},nan)}, col sep=comma]
    {sections/experiments/data/cox-results.csv};
\addlegendentry{}
\addplot [thick, color2, mark=square*, mark size=2, mark options={solid,draw=white}] 
    table[x=T,y expr={ifthenelse(\thisrow{N}==100,\thisrow{FFBS},nan)}, col sep=comma]
    {sections/experiments/data/cox-results.csv};
\addlegendentry{$N = 100$}  
\addplot [thick, color3, mark=*, mark size=2, mark options={solid,draw=white}] 
    table [x=T,y expr={ifthenelse(\thisrow{N}==250,\thisrow{dSMC},nan)}, col sep=comma]
    {sections/experiments/data/cox-results.csv};
\addlegendentry{}
\addplot [thick, color3, mark=square*, mark size=2, mark options={solid,draw=white}] 
    table[x=T,y expr={ifthenelse(\thisrow{N}==250,\thisrow{FFBS},nan)}, col sep=comma]
    {sections/experiments/data/cox-results.csv};
\addlegendentry{$N = 250$}  
\addplot [thick, color4, mark=*, mark size=2, mark options={solid,draw=white}] 
    table [x=T,y expr={ifthenelse(\thisrow{N}==500,\thisrow{dSMC},nan)}, col sep=comma]
    {sections/experiments/data/cox-results.csv};
\addlegendentry{}
\addplot [thick, color4, mark=square*, mark size=2, mark options={solid,draw=white}] 
    table[x=T,y expr={ifthenelse(\thisrow{N}==500,\thisrow{FFBS},nan)}, col sep=comma]
    {sections/experiments/data/cox-results.csv};
\addlegendentry{$N = 500$} 
\addplot [thick, color5, mark=*, mark size=2, mark options={solid,draw=white}] 
    table [x=T,y expr={ifthenelse(\thisrow{N}==1000,\thisrow{dSMC},nan)}, col sep=comma]
    {sections/experiments/data/cox-results.csv};
\addlegendentry{}
\addplot [thick, color5, mark=square*, mark size=2, mark options={solid,draw=white}] 
    table[x=T,y expr={ifthenelse(\thisrow{N}==1000,\thisrow{FFBS},nan)}, col sep=comma]
    {sections/experiments/data/cox-results.csv};
\addlegendentry{$N = 1000$}  
\end{axis}

\end{tikzpicture}
    }
    \caption{Average clock time of running a sequential FFBS vs. dSMC. For $T$ small enough, dSMC scales logarithmically, and then linearly when the parallelization threads have all been utilized. The effect is more pronounced for a higher number of particles.} \label{fig:cox-runtime}
\end{figure}

The resulting average running times of the corresponding algorithms are shown in Figure~\ref{fig:cox-runtime}. Our algorithm is always faster that its sequential FFBS counterpart. Due to the limited number of threads on our GPU, the logarithmic complexity scaling of our proposed method reaches a technical upper bound as we increase the number of sampled time steps. In particular the number of time steps that can effectively be parallelized is a decreasing function of the number of particles used. After the parallelization limit has been reached, dSMC scales linearly as further progress is blocked by waiting that a thread becomes free to use.

On the other hand, as can be expected from using independent proposals, our algorithm exhibits a larger error for estimating the Fisher score function, and this error increases with the number of time steps we want to sample. This effect is illustrated by Figure~\ref{fig:cox-error}. 
\begin{figure}[!htb]
    \centering
    \resizebox{\linewidth}{!}{
    \begin{tikzpicture}
\pgfplotsset{every tick label/.append style={font=\tiny}}

\pgfplotsset{
    legend image with text/.style={
        legend image code/.code={%
            \node[anchor=center] at (0.3cm,0cm) {#1};
        }
    },
}

\begin{axis}[
    xlabel shift = -3 pt,
    ylabel shift = -3 pt,
    height = 6cm,
    width = 10cm,
    legend style={nodes={scale=0.5, transform shape}},
    legend pos=north west,
    legend columns=1, %
    log basis x={10},
    tick align=outside,
    xlabel={\scriptsize $N$},
    xmode=log,
    ylabel={\scriptsize $\sigma_{\textrm{dSMC}} / \sigma_{\textrm{FFBS}}$},
    ymin=0.75,
]

\addplot [thick, color0, mark=*, mark size=2, mark options={solid,draw=white}] 
    table [x=N,y expr={ifthenelse(\thisrow{T}==32,\thisrow{ratio},nan)}, col sep=comma]
    {sections/experiments/data/cox-results.csv};
\addlegendentry{$T=32$}
\addplot [thick, color1, mark=*, mark size=2, mark options={solid,draw=white}] 
    table [x=N,y expr={ifthenelse(\thisrow{T}==64,\thisrow{ratio},nan)}, col sep=comma]
    {sections/experiments/data/cox-results.csv};
\addlegendentry{$T=64$}
\addplot [thick, color2, mark=*, mark size=2, mark options={solid,draw=white}] 
    table [x=N,y expr={ifthenelse(\thisrow{T}==128,\thisrow{ratio},nan)}, col sep=comma]
    {sections/experiments/data/cox-results.csv};
\addlegendentry{$T=128$}
\addplot [thick, color3, mark=*, mark size=2, mark options={solid,draw=white}] 
    table [x=N,y expr={ifthenelse(\thisrow{T}==256,\thisrow{ratio},nan)}, col sep=comma]
    {sections/experiments/data/cox-results.csv};
\addlegendentry{$T=256$}
\addplot [thick, color4, mark=*, mark size=2, mark options={solid,draw=white}] 
    table [x=N,y expr={ifthenelse(\thisrow{T}==512,\thisrow{ratio},nan)}, col sep=comma]
    {sections/experiments/data/cox-results.csv};
\addlegendentry{$T=512$}

\end{axis}

\end{tikzpicture}
    }
    \caption{Average relative error of running a sequential FFBS vs dSMC. dSMC always exhibits a higher error than FFBS, the ratio between the two increasing as $T$ increases.} \label{fig:cox-error}
\end{figure}
There therefore exists a natural trade-off between speed and precision, which can be beneficial or not depending on the application. In the next section we show that the increase in variance does not necessarily affect sampling performance in practice.

\subsection{Particle Gibbs sampling of theta-logistic model}
    \label{subsec:exp_gibbs}
    The goal of this section is to show how the c-dSMC algorithm can be used to perform particle Gibbs sampling while not reducing its performance compared to the sequential version of cSMC. In order to illustrate the properties of this PIT pGibbs algorithm, we consider the following theta-logistic state-space model:
\begin{align}
    x_0 &\sim \mathcal{N}(0, 1),\\
    x_t &= x_{t-1} + \tau_0 - \tau_1 \exp(\tau_2 x) + \epsilon_t, \quad \epsilon_t \sim \mathcal{N}(0, q^2), t \geq 1 \\
    y_t &= x_t + \gamma_t, \quad \gamma_t \sim \mathcal{N}(0, r^2),\quad t \geq 0.
\end{align}

This model was originally proposed by \citet{lande2003stochastic} in order to model population dynamics and has been used as a benchmark for PMCMC methods in, for example, \cite{peters2010ecological}, \citet[Chap. 16]{Chopin2020Book}. We use the same prior and data (nutria, $T+1=120$) as in these references. 

For c-dSMC, we take $q_t = \nu_t$ to be a ``locally adapted" $q_t = p_{EKS}(x_t \mid y_{0:T}, \tau_0, \tau_1, \tau_2, q, r)$ given by the parallel extended Kalman smoother described in Section~\ref{subsec:pit-gaussian}. More precisely, given an initial sample from the prior $p(\tau_0, \tau_1, \tau_2, q, r)$, we compute the iterated EKS solution with $25$ iterations and take the $q_t$'s to be the resulting approximated smoothing marginal. For all subsequent steps, given new parameters, we run a single step of the iterated EKS, starting from the previous iterated EKS approximation, and use the updated Gaussian approximated smoothing marginals as our new proposal distributions $q_t$'s.

\begin{figure}[!htb]
    \centering
    \resizebox{\linewidth}{!}{
    \begin{tikzpicture}[scale=1.]
\pgfplotsset{every tick label/.append style={font=\tiny}}

\begin{axis}[
    ylabel shift = -1 pt,
    height = 8cm,
    width = 10cm,
    xlabel = {\scriptsize Time step ($t$)},
    ylabel = {\scriptsize Update rate},
    ymin=0, ymax=1,
    xmin=0, xmax=120,
    yscale=0.5,
]
    \addplot[black, dashed, line width=0.5pt] table [x=time, y=Sequential, col sep=comma]{sections/experiments/data/update_rate.csv}; \label{update_rate_seq}
    \addplot[black, line width=0.5pt] table [x=time, y=Parallel, col sep=comma]{sections/experiments/data/update_rate.csv}; \label{update_rate_par}
\end{axis}
\end{tikzpicture}
    }
    \caption{Average update rate of the star trajectory $X^*_t$ for each time $t$.
    While the average update rate of cSMC with backward sampling
(\ref{update_rate_seq}) is often higher than that of c-dSMC (\ref{update_rate_par}), the latter is more homogeneous across time steps.}
    \label{fig:update-rate-gibbs}
\end{figure}
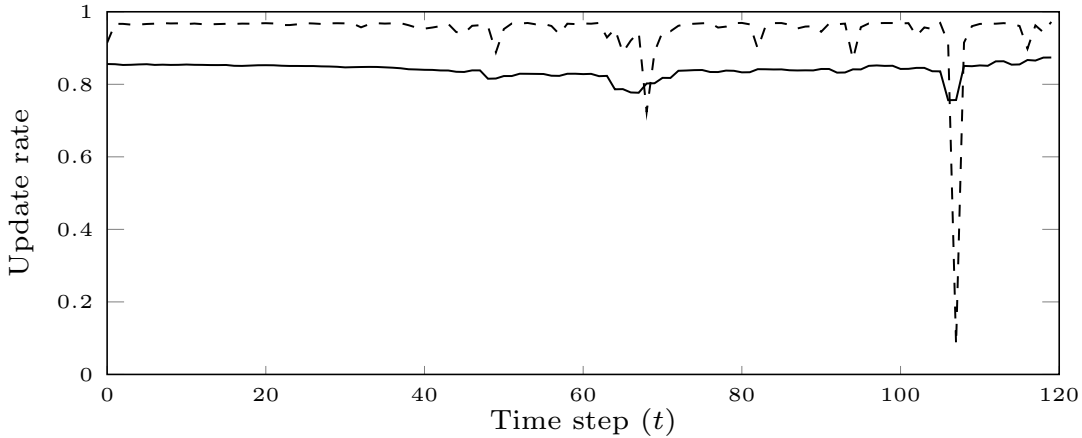

As can be seen in Figure~\ref{fig:update-rate-gibbs}, the update rate for the sampled trajectory is approximately 85\% for c-dSMC, homogeneously across all time steps \emph{without any explicit backward sampling step}. This is to be compared with the non-uniform renewal rates  ($\approx 90\%$)  of the standard pGibbs algorithm when a backward sampling step
\citep{Whiteley_disc_PMCMC, Lindsten2012} is implemented. 

Moreover, obtaining $10^5$ samples from the Gibbs chain took 480 seconds with c-dSMC, while it took 4,083 seconds for the bootstrap cSMC with backward sampling. Finally, the ACFs (auto-correlation functions) of the Markov chains formed by the parameters posterior samples are virtually identical, as illustrated by Figure~\ref{fig:acf-gibbs}.
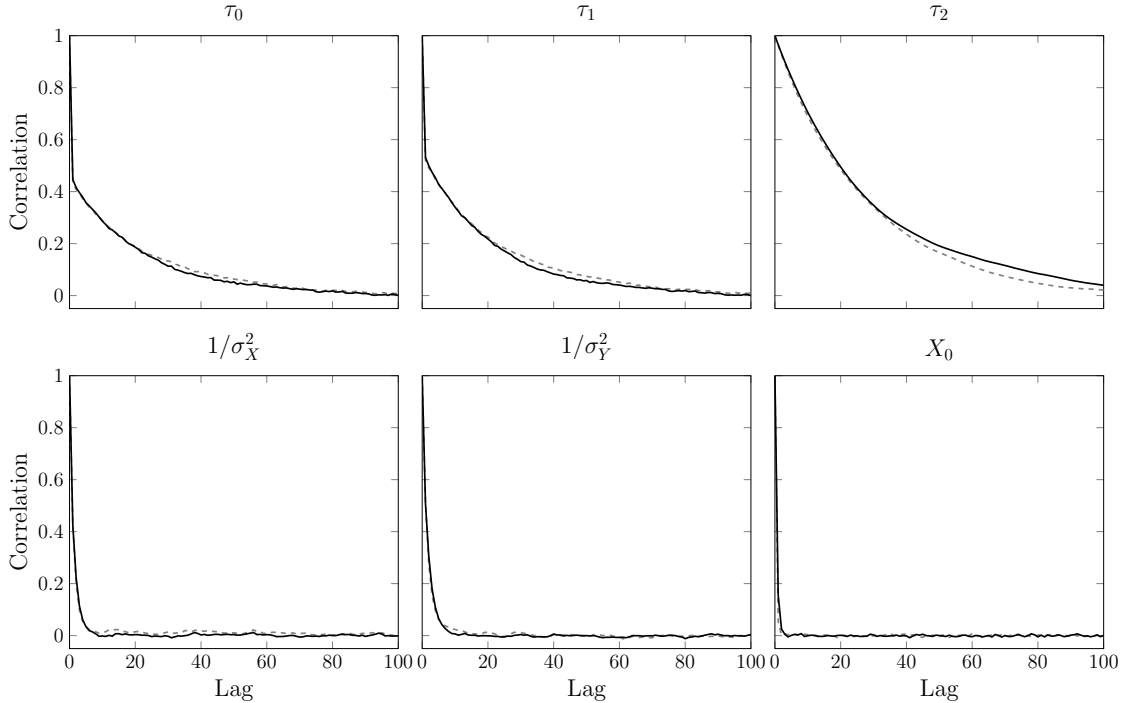
\begin{figure}[!htb]
    \centering
    \resizebox{\linewidth}{!}{
    \begin{tikzpicture}[scale=1.]
\pgfplotsset{every tick label/.append style={font=\large}}

\begin{groupplot}[
    group style={columns=3,rows=2,horizontal sep=0.5 cm,vertical sep=1.4 cm},
    ymin=-0.05, ymax=1,
    xmin=0, xmax=100,
]   
    \nextgroupplot[title={\Large $\tau_0$}, xtick={}, xticklabels={}, ylabel={\Large Correlation}]
    \addplot[gray, dashed, line width=1pt] table [x=lag, y=Sequential, col sep=comma]{sections/experiments/data/acf-tau0.csv}; \label{acf-seq}
    \addplot[black, line width=1pt] table [x=lag, y=Parallel, col sep=comma]{sections/experiments/data/acf-tau0.csv}; \label{acf-par}
    
    \nextgroupplot[title={\Large $\tau_1$}, xtick={}, xticklabels={}, ytick={}, yticklabels={}]
    \addplot[gray, dashed, line width=1pt] table [x=lag, y=Sequential, col sep=comma]{sections/experiments/data/acf-tau1.csv};
    \addplot[black, line width=1pt] table [x=lag, y=Parallel, col sep=comma]{sections/experiments/data/acf-tau1.csv};
    
    \nextgroupplot[title={\Large $\tau_2$}, xtick={}, xticklabels={}, ytick={}, yticklabels={}]
    \addplot[gray, dashed, line width=1pt] table [x=lag, y=Sequential, col sep=comma]{sections/experiments/data/acf-tau2.csv};
    \addplot[black, line width=1pt] table [x=lag, y=Parallel, col sep=comma]{sections/experiments/data/acf-tau2.csv};
    
    \nextgroupplot[title={\Large $1/\sigma_X^2$}, xlabel={\Large Lag}, ylabel={\Large Correlation}]
    \addplot[gray, dashed, line width=1pt] table [x=lag, y=Sequential, col sep=comma]{sections/experiments/data/acf-x_prec.csv};
    \addplot[black, line width=1pt] table [x=lag, y=Parallel, col sep=comma]{sections/experiments/data/acf-x_prec.csv};
    
    \nextgroupplot[title={\Large $1/\sigma_Y^2$}, ytick={}, yticklabels={}, xlabel={\Large Lag},]
    \addplot[gray, dashed, line width=1pt] table [x=lag, y=Sequential, col sep=comma]{sections/experiments/data/acf-y_prec.csv};
    \addplot[black, line width=1pt] table [x=lag, y=Parallel, col sep=comma]{sections/experiments/data/acf-y_prec.csv};
    
    \nextgroupplot[title={\Large $X_0$}, ytick={}, yticklabels={}, xlabel={\Large Lag},]
    \addplot[gray, dashed, line width=1pt] table [x=lag, y=Sequential, col sep=comma]{sections/experiments/data/acf-x0.csv};
    \addplot[black, line width=1pt] table [x=lag, y=Parallel, col sep=comma]{sections/experiments/data/acf-x0.csv};
\end{groupplot}

\end{tikzpicture}
    }
    \caption{Auto-correlation plots for the parameters $\tau_0$, $\tau_1$, $\tau_2$, $1/\sigma_X^2$, $1/\sigma_Y^2$, and the initial state $X_0$ posterior samples. Using cSMC with backward sampling (\ref{acf-seq}) or c-dSMC (\ref{acf-par}) results in similar auto-correlation functions for the posterior samples.}
    \label{fig:acf-gibbs}
\end{figure}

\subsection{Speed-up and variance reduction via lazy resampling} 
    \label{subsec:lazy-experiment}
    We now show how the lazy resampling methods introduced in Section~\ref{subsec:parallel-resampling} can help speed up dSMC significantly, while at the same time retaining the same variance as the original method. In order to do so, similarly to \citet[Section 4.1]{Deligiannidis2020ensemble}, we consider a constrained random walk model studied, for example, in \citet{DelMoral2004particle} and \citet{Adorisio2018exact}. While, contrarily to these works, we are not concerned with exact simulation, this model is helpful in understanding the impact of the weights variance on the total runtime and variance of dSMC with lazy resampling. Indeed, the model is controlled by a single parameter $\sigma$ which represents the noise of the constrained random walk, and directly impacts the variance of the weights in dSMC. Furthermore, this model is not easily approximated by an LGSSM, and therefore, the variance reduction method of Section~\ref{subsec:pit-gaussian} does not apply here. 

Formally, the model is defined as follows: 
\begin{equation}
    \begin{split}
        x_0 &\sim \mathcal{N}(0, 1) \\
        x_{t} &= x_{t-1} + \sigma \epsilon_{t-1}, \quad \epsilon_{t-1} \sim \mathcal{N}(0, 1)
    \end{split}
\end{equation}
and we want to sample from $p(x_{0:T} \mid -1 \leq x_t \leq 1, t=0, \ldots, T)$.
This model corresponds to the transition kernel $P_t(\dd{x}_t \mid x_{t-1}) \sim \mathcal{N}(x_{t-1}, \sigma^2)$ with potential function $h_t(x_t) = \mathbbm{1}_{[-1, 1]}(x_t)$. Following \citet{Deligiannidis2020ensemble}, we consider the proposal $q_t = \mathcal{U}([-1, 1])$; the weights $\omega_t$ are then upper-bounded by $(2 \pi \sigma)^{-1/2}$. As $\sigma$ gets higher, we expect the lazy resampling schemes in Section~\ref{subsec:parallel-resampling} to perform better. 
In order to compare the different smoothers alternatives, we estimate the Fisher score $\mathbb{E}\left[\varphi(X_{0:T}) \mid -1 \leq X_t \leq 1, t=0, \ldots, T\right]$ (up to a multiplicative constant) of this model, where
\begin{align}\label{eq:fisher_rare}
    \varphi(x_{0:T}) = \log(\sigma) + \frac{1}{\sigma^3} \sum_{t=1}^T\left(x_t - x_{t-1}\right)^2.
\end{align}

For the sake of simplicity, we only consider the rejection version of our lazy resampling methods. This is because the Metropolis--Hastings version is biased, so that convergence theorems do not apply, and because it has already been proven to  work better than its rejection counterpart (when one is not worried about the unbiasedness of the resulting algorithm) in \citet{Murray2016ParallelResampling}. 

In Figure~\ref{fig:runtime-lazy}, we take $\sigma$ to be in $\{0.3, 0.4, 0.5\}$, this set being taken to be around the value when using lazy resampling starts to outperform FFBS, and we report the average run times of FFBS, dSMC with systematic resampling (sys-dSMC), and dSMC with rejection-resampling (rs-dSMC), together with the respective the variance of the resulting Fisher's score estimates. 

\begin{figure}
    \centering
    \resizebox{\textwidth}{!}{
    \input{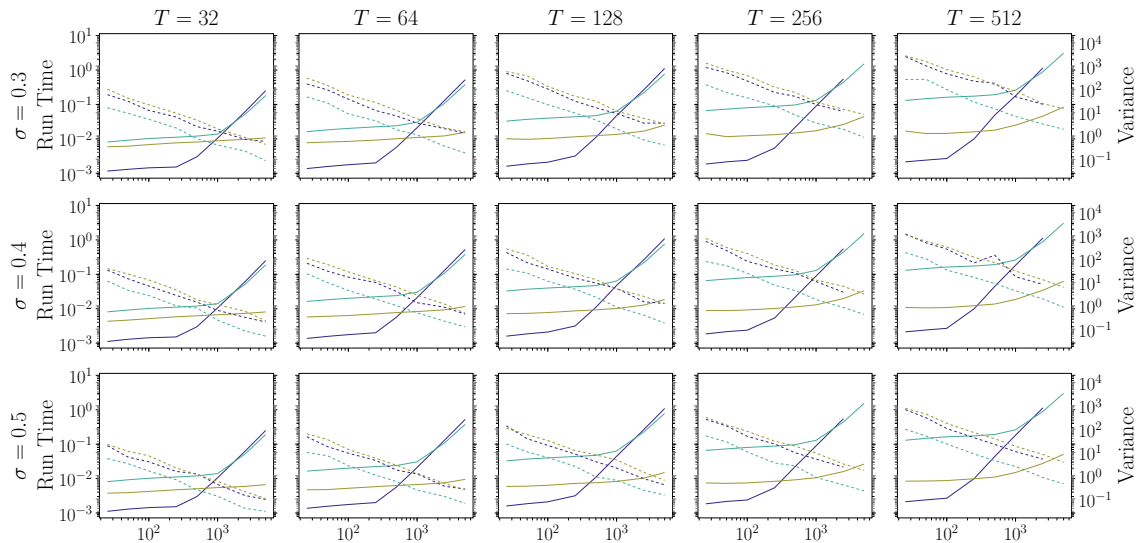}
    }
    \caption{Average run times (\ref{line:runtime-par}) to compute, and variances (\ref{line:variance-par}) of, Fisher's score estimate \eqref{eq:fisher_rare} as a function of $N$ for FFBS (\ref{line:runtime-FFBS}), sys-dSMC (\ref{line:runtime-par}), and rs-dSMC (\ref{line:runtime-lazy}), and different values of $T$ and $\sigma$. We can see that for lower variance weights regimes (higher $\sigma$), rs-dSMC runs largely faster than both FFBS and sys-dSMC, while at the same time retaining roughly the same variance as sys-dSMC.}
    \label{fig:runtime-lazy}
\end{figure}

For low $N$'s, sys-dSMC is the fastest, with fairly high variance estimates of the Fisher score, as previously discussed in Section~\ref{subsec:ffbs_comp}. However, for larger $N$ values, despite its random run time, rs-dSMC completely outperforms both FFBS and sys-dSMC in terms of speed. Moreover, for $\sigma=0.4$ and $\sigma=0.5$ and all $T$'s, the slowest running rs-dSMC ($N=5000$) is faster than the fastest running FFBS ($N=25$) and exhibits a lower Fisher score estimate variance than FFBS with more particles than $N=25$. Finally, this improved performance becomes better as the number of time steps $T$ increases, therefore confirming the appeal of dSMC for high values of $T$.

\section{Discussion}
\label{sec:discussion}
In this article we have introduced de-Sequentialized \MC{}, the first fully parallel-in-time particle smoother. This algorithm exhibits $\mathcal{L}_p$ error bounds that scale polynomially in the number of times steps and inverse proportionally to the number of particles used. Futhermore, we have shown how one can build a conditional version of dSMC, to be used, for example, in particle Gibbs algorithm. Furthermore, we discussed two variance reduction schemes based on parallel-in-time linear Gaussian state-space models approximants, as well as lazy resampling schemes. The resulting algorithms have then be shown to be competitive with standard sequential methods in different non-trivial regimes.

While the Gaussian approximations recover a lot of practical use cases, their nature makes them inadequate to approximate, for example, multi-modal posteriors. Designing proposals with more modeling capacity, anf fully utilizing the additional degree of freedom offered by the different roles of $\nu$ and $q$ is an important direction of future work. This could be done, for instance, using direct gradient methods~\citep{Corenflos2021DPF,Naesseth2017DPF,Maddison2017DPF,Le2017DPF} or more iterative methods~\citep{Guarniero2017Iterated,heng2020controlled}. 

Our parallel smoother exhibits good statistical and computational properties in non-trivial regimes, and allows faster inference at the cost of some precision. The loss of precision coming from the need to use independent proposal distributions, and we believe future research should maybe directed towards using pathwise proposals instead, for example by further leveraging the LGSSM approximants of \citet{Yaghoobi2021ParallelEKF}.

Because we developed a conditional version of dSMC, our algorithm can be used \textit{mutatis mutandis} within the unbiased coupled smoothing framework of \citet{jacob2019smoothing}. While (non-lazy) dSMC exhibits higher variance than its sequential counterparts (for the same number of particles), the framework of \citet{jacob2019smoothing} allows to average independent such estimates to increase the precision of the resulting estimate arbitrarily, making the gain of speed particularly attractive in this context.

An important technical limitation of our methodology is the necessity, at each level of the recursion, to explicitly form several $N \times N$ matrices. While this does not impact the theoretical logarithmic properties of our algorithm, this clearly limits the number of particles that we can use in at least two way: the memory footprint will scale quadratically with it, and the number of threads being limited, a processing bottleneck may appear (as illustrated in Figure~\ref{fig:cox-runtime}). We mitigated these issues by utilizing the parallel resampling perspective of \citet{Murray2016ParallelResampling} as a lazy resampling scheme, never computing more than $N$ weights at once, which allowed us to improve the scalability of dSMC in the low weights variance regime. We believe that this method can be further improved by using non-uniform proposals on the indices pairs $(I, J)$ so as to target specific pairs that have a higher \textit{a priori} chance of resulting in a high weight. It was also suggested in \citet{corenflos2022coupled} that using ensemble techniques in parallel resampling schemes may result in an improved performance at the cost of a slightly higher memory consumption. Both these extensions deserve more investigation.

On the computational resource perspective, over the years parallel processing hardware have continually increased both the memory and number of threads, so we except our algorithm to become increasingly competitive in the future. Similarly, it is also possible to distribute the computations across several processors (be it GPUs of CPUs), which in turn would result in making the algorithm scale better with the number of time steps or particles, provided that the communication cost between processors remains limited. Combining this technical solution with the lazy resampling approach of Section~\ref{subsec:parallel-resampling} in particular would likely result in a very competitive smoothers.

Finally, it was recently suggested in \citet{Deligiannidis2020ensemble} that it is possible to perform perfect sampling of SSMs smoothing distributions provided we use independent proposals. While our algorithm does not sample exactly from the same proposal distribution, it is our hope that the methods developed here could be applied to sampling from their proposal distribution too, thereby making their sampling algorithm scale logarithmically in time.


\section*{Individiual contributions}
The original idea for this article comes from discussions between Adrien Corenflos and Simo S\"arkk\"a. The methodology of dSMC was developed by Adrien Corenflos in collaboration with Nicolas Chopin. The pGibbs and lazy resampling extensions are both due to Adrien Corenflos while the LGSSM approximants are jointly due to Simo S\"arkk\"a and Adrien Corenflos. The original proofs of this article's results are due to Adrien Corenflos, the convergence rate of Proposition~\ref{prop:consistency} being subsequently improved with the help of Nicolas Chopin. The experimental results are all due to Adrien Corenflos. The first version of this article was written by Adrien Corenflos, after which all authors contributed to the writing.

\bibliographystyle{apalike}
\bibliography{main.bib}

\newpage
\appendix
\section{Parallel combination algorithm}
\label{app:parallel-combination}
We now reproduce a parallel equivalent to Algorithm~\ref{algo:recursion}. It can generally be thought of a divide-and-conquer algorithm akin to prefix-sum algorithms, but not requiring associativity of the operator. Algorithm~\ref{algo:parallel-reshape-combination} is phrased in terms of generic operators and elements which, in the particular case of parallel particle smoothing, need to be taken to be, respectively, the operator defined in Algorithm~\ref{algo:block-combination} and the set of particles, weights and partial normalizing constants.

\begin{algorithm}[H]
\SetAlgoLined
\KwResult{Combined array}
\Fn{\textsc{ParallelReshapeCombination}$\left(Z_{1:K}, \textsc{Operator}\right)$}{
    Find $L$ such that $2^{L-1} < K \leq 2^L$\;
    \ParFor{$t=1, \ldots, K$}{
        \tcp{Flag that says if we should use the value or not}
        $b_t \leftarrow 1$\; 
    }
    \tcp{Pad $Z$ to the next power of $2$ using some \textsc{null} value}
    \ParFor{$t=K+1, \ldots, 2^L$}{
        $Z_t \leftarrow \textsc{null}$\;
        $b_t \leftarrow 0$\;
    }
    \For{$l=0, \ldots L-1$}{
        \ParFor{$n=1, \ldots 2^{L-l}$}{
            \tcc{Join the $Z$'s block by block, this corresponds to reshaping the array and do not result in creating a new array.}
            $Y_n \leftarrow \left[Z_{1 + (n-1) 2^l}, Z_{2 + (n-1) 2^l}, \ldots Z_{n 2 ^l}\right]$\;
        }
        \ParFor{$n=1, \ldots 2^{L-l}$}{
            \tcc{Combine the adjacent odd and even $Y$'s if we have not reached the padding threshold, otherwise, just leave the data unchanged.}
            \If{$b_{1 + n 2^l} = 1$}{
                $\left[Z_{1 + (n-1) 2^l}, Z_{2 + (n-1) 2^l}, \ldots Z_{n 2 ^l}\right], \left[Z_{1 + n 2^l}, Z_{2 + n 2^l}, \ldots Z_{(n+1) 2 ^l}\right] \leftarrow \textsc{CombinationOperator}\left(Y_n, Y_{n+1}\right)$\;
            }
        }
    }
\Return{$Z_{1:K}$}
}
\caption{Generic parallel combination via array reshaping}
\label{algo:parallel-reshape-combination}
\end{algorithm}

It is worth noting that Algorithm~\ref{algo:parallel-reshape-combination} and Algorithm~\ref{algo:recursion} are not strictly equivalent. This is because the combination operator used for smoothing is random and depends on the state of a random number generator. In fact two reasons make these two algorithm differ:
\begin{enumerate}
    \item The order in which the nodes at a given depth of Algorithm~\ref{algo:recursion} are handled is arbitrary. Similarly for the order in which we combine adjacent blocks in Algorithm~\ref{algo:parallel-reshape-combination}.
    \item The splitting of Algorithm~\ref{algo:parallel-reshape-combination}, although corresponding to a balanced tree, will not correspond to the mid-point splitting of Algorithm~\ref{algo:recursion} except when $T+1$ is a power of $2$.
\end{enumerate}
However, both algorithms are consistent and can be analyzed by Propositon~\ref{prop:consistency} in the same way.

\section{Lazy resampling algorithms}
\label{app:lazy-resampling}
We now describe the lazy resampling algorithms introduced in Section~\ref{subsec:parallel-resampling}. The Metropolis-Hastings version is given by Algorithm~\ref{algo:m-h-resampling}, while the rejection sampling one is given by Algorithm~\ref{algo:r-s-resampling}.

\begin{algorithm}[!htb]
\SetAlgoLined
\KwResult{Resampling indices $(I_m, J_m)$ for $m=1, \ldots, N$.}
\Fn{\textsc{MHResampling}$\left(X_{c-1}^{1:N}, X_{c}^{1:N}, \omega_c, B\right)$}{
    \ParFor{$m=1, \ldots, N$}{
        $(I_m, J_m) \leftarrow (m, m)$\;
        \For{$b=1, \ldots, B$}{
            Sample $u \sim \mathcal{U}([0, 1])$\;
            Sample $I^*, J^* \sim \mathcal{U}(\{1, \ldots, N\})$, independently\;
            \If{$u < \omega_c(X_{c-1}^{I^*}, X_{c}^{J^*}) / \omega_c(X_{c-1}^{I_m}, X_{c}^{J_m})$}{
                $(I_m, J_m) \leftarrow (I^*, J^*)$\;
            }
        }
    }
    \Return{$I_{1:N}, J_{1:N}$}
}
\caption{Metropolis-Hastings lazy resampling algorithm}
\label{algo:m-h-resampling}
\end{algorithm}

It is worth noting that, contrarily to Algorithm 3 (Code 3) in \citet{Murray2016ParallelResampling}, the initial proposal in Algorithm~\ref{algo:r-s-resampling} is random and not deterministic. This is because the deterministic starting point of \citet{Murray2016ParallelResampling} would result in a bias when subsampling $N$ candidates from the $N \times N$ entries in the weight matrix.

\begin{algorithm}[!htb]
\SetAlgoLined
\KwResult{Resampling indices $(I_m, J_m)$ for $m=1, \ldots, N$.}
\Fn{\textsc{RSResampling}$\left(X_{c-1}^{1:N}, X_{c}^{1:N}, \omega_c, \overline{\omega}_c\right)$}{
    \tcp{$\overline{\omega}_c$ is such that $\omega_c(x,y) \leq \overline{\omega}_c$ for all $x,y$.}
    \ParFor{$m=1, \ldots, N$}{
        Sample $I_m$, $J_m \sim \mathcal{U}(\{1, \ldots, N\})$, independently\;
        Sample $u \sim \mathcal{U}([0, 1])$\;
        \While{$u > \omega_c(X_{c-1}^{I_m}, X_{c}^{J_m}) / \overline{\omega}_c$}{
            Sample $I_m, J_m \sim \mathcal{U}(\{1, \ldots, N\})$, independently\;
        }
    }
    \Return{$I_{1:N}, J_{1:N}$}
}
\caption{Rejection-sampling lazy resampling algorithm}
\label{algo:r-s-resampling}
\end{algorithm}

\section{Proof of Proposition~\ref{prop:consistency}}
\label{app:proof}
For simplicity we only consider the case when $w_{c-1}^n = w_{c}^n= 1/N$, for all $n \in 1:N$. The general case follows from the same lines. Using Minkowski's inequality, we have
\begin{equation}
     \bbE{}\left[\left|\ft{\bbQ{}}{a}{b}(\varphi) - \ft{\bbQ{}^{N}}{a}{b}(\varphi)\right|^p\right]^{1/p}
        \leq \bbE{}\left[\left|\ft{\bbQ{}}{a}{b}(\varphi) - \ft{\widetilde{\bbQ}^N}{a}{b}(\varphi)\right|^p\right]^{1/p} + \bbE{}\left[\left|\ft{\widetilde{\bbQ}^N}{a}{b}(\varphi) - \ft{\bbQ{}^{N}}{a}{b}(\varphi)\right|^p\right]^{1/p}. \label{eq:split_equation}
\end{equation}
The second term of \eqref{eq:split_equation}, corresponding to the resampling error, can be controlled as a Monte Carlo error via \citet[][Lemma 7.3.3]{DelMoral2004Book}. Indeed, let us first notice that we have $\bbE{}\left[\ft{\bbQ^{N}}{a}{b}(\varphi) \mid \ft{X^{1:N}}{a}{b}\right] = \sum_{m,n=1}^N W^{mn}_c \varphi\left(\ft{X^{m}}{a}{c-1}, \ft{X^{n}}{c}{b}\right)$, and that, given that we are considering the multinomial resampling case, conditionally on $\ft{X^{1:N}}{a}{b}$, the variables $(l_n, r_n)_{n=1}^N$ are independent. In this case, 
\begin{align}
    \bbE{}\left[\left|\ft{\widetilde{\bbQ}^N}{a}{b}(\varphi) - \ft{\bbQ{}^{N}}{a}{b}(\varphi)\right|^p \mid \ft{X^{1:N}}{a}{b} \right]^{1/p} \leq d(p) \frac{\norm{\varphi}_{\infty}}{N^{1/2}}
\end{align}
for some constant $d(p) \leq 2^{(p+1)/p}$, so that the tower law ensures that 
\begin{align}
    \bbE{}\left[\left|\ft{\widetilde{\bbQ}^N}{a}{b}(\varphi) - \ft{\bbQ{}^{N}}{a}{b}(\varphi)\right|^p\right]^{1/p} \leq 2^{(p+1)/p} \frac{\norm{\varphi}_{\infty}}{N^{1/2}}
 \end{align}
is verified too.

On the other hand, the first term of \eqref{eq:split_equation}, corresponding to the self-normalization error, requires more attention. In order to simplify notations, let us introduce the following quantities:
\begin{align}
    \ft{\widehat{\bbQ}^N}{c}{b}(\varphi) &\coloneqq \frac{1}{N}\sum_{n=1}^N\int \bar{\omega}_c\left(X^n_{c-1}, x_c\right) \varphi\left(\ft{X^n}{a}{c-1}, \ft{x}{c}{b}\right) \ft{\bbQ}{c}{b}(\dd{\ft{x}{c}{b}}), \\ \ft{\breve{\bbQ}^{N}}{a}{b}(\varphi) &\coloneqq \frac{1}{N^2}\sum_{m,n=1}^N \bar{\omega}_c\left(X^{m}_{c-1}, X^n_c\right) \varphi\left(\ft{X^{m}}{a}{c-1}, \ft{X^n}{c}{b}\right).
\end{align}
Using Minkowski's inequality again, twice, we can now decompose the first term of \eqref{eq:split_equation} as 
\begin{align}
    \bbE{}\left[\left|\ft{\bbQ}{a}{b}(\varphi) - \ft{\widetilde{\bbQ}^N}{a}{b}(\varphi)\right|^p\right]^{1/p} \leq \bbE{}\left[\left|\ft{\bbQ}{a}{b}(\varphi) - \ft{\breve{\bbQ}^{N}}{a}{b}(\varphi)\right|^p\right]^{1/p} +  \bbE{}\left[\left|\ft{\breve{\bbQ}^{N}}{a}{b}(\varphi)- \ft{\widetilde{\bbQ}^N}{a}{b}(\varphi)\right|^p\right]^{1/p}, \label{eq:split-equation-again}
\end{align}
so that, splitting once more, we have
\begin{equation}
    \begin{split}
    \bbE{}\left[\left|\ft{\bbQ}{a}{b}(\varphi) - \ft{\breve{\bbQ}^{N}}{a}{b}(\varphi)\right|^p\right]^{1/p}
        & \leq \bbE{}\left[\left|\ft{\bbQ}{a}{b}(\varphi) - \ft{\widehat{\bbQ}^N}{c}{b}(\varphi)\right|^p\right]^{1/p}\\
        &+ \bbE{}\left[\left|\ft{\widehat{\bbQ}^N}{c}{b}(\varphi) - \ft{\breve{\bbQ}^{N}}{a}{b}(\varphi)\right|^p\right]^{1/p}. 
        \end{split}
        \label{eq:square-mc-error}
\end{equation}
Let us first remark that 
\begin{align}
  \ft{\bbQ}{a}{b}(\varphi) =
  \ft{\bbQ}{a}{c-1}\left(\ft{x}{a}{c-1} \mapsto \int \bar{\omega}_c\left(x_{c-1}, x_c\right) \varphi\left(\ft{x}{a}{c-1}, \ft{x}{c}{b}\right) \ft{\bbQ}{c}{b}(\dd{\ft{x}{c}{b}})\right).
\end{align}
The integrand $\ft{x}{a}{c-1} \mapsto \int \bar{\omega}_c\left(x_{c-1},
x_c\right) \varphi\left(\ft{x}{a}{c-1}, \ft{x}{c}{b}\right)
\ft{\bbQ}{c}{b}(\dd{\ft{x}{c}{b}})$ is upper bounded by
$\norm{\bar{\omega}_c}_{\infty} \norm{\varphi}_{\infty}$, so that we can apply the
recursion hypothesis to get
\begin{align}
    \bbE{}\left[\left|\ft{\bbQ}{a}{b}(\varphi) - \ft{\widehat{\bbQ}^{N}}{c}{b}(\varphi)\right|^p\right]^{1/p} \leq \ft{C^p}{a}{c-1} \norm{\bar{\omega}_{c}}_{\infty} \frac{\norm{\varphi}_\infty}{N^{1/2}}.
\end{align}
On the other hand, using the tower law, the second term of~\eqref{eq:square-mc-error} becomes
\begin{align}
    \bbE{}\left[\left|\ft{\widehat{\bbQ}^{N}}{c}{b}(\varphi) - \ft{\breve{\bbQ}^{N}}{a}{b}(\varphi)\right|^p\right] 
        &= \bbE\left[\bbE{}\left[\left|\ft{\widehat{\bbQ}^{N}}{c}{b}(\varphi) - \ft{\breve{\bbQ}^{N}}{a}{b}(\varphi)\right|^p \mid \ft{X^{1:N}}{a}{c-1}\right] \right]. 
\end{align}
Noting that \[\ft{\widehat{\bbQ}^{N}}{c}{b}(\varphi)
= \ft{\bbQ}{c}{b}\left(\ft{x}{c}{b} \mapsto N^{-1} \sum_{n=1}^N
  \bar{\omega}_c\left(X^n_{c-1}, x_c\right) \varphi\left(\ft{X^n}{a}{c-1},
    \ft{x}{c}{b}\right)\right),\]
and that, for all $n=1, \ldots, N$ and all $\ft{x}{c}{b}$, $N^{-1}
\sum_{n=1}^N \bar{\omega}_c\left(X^n_{c-1}, x_c\right) \varphi\left(\ft{X^n}{a}{c-1},
\ft{x}{c}{b}\right) \leq \norm{\bar{\omega}_c}_\infty \norm{\varphi}_\infty$, we can
leverage the recursion hypothesis one more time to obtain
\begin{align}
    \bbE{}\left[\left|\ft{\widehat{\bbQ}^{N}}{c}{b}(\varphi) -
\ft{\breve{\bbQ}^{N}}{a}{b}(\varphi)\right|^p \mid
\ft{X^{1:N}}{a}{c-1}\right]^{1/p}\leq \ft{C^p}{a}{c-1}
\norm{\bar{\omega}_{c}}_{\infty} \frac{\norm{\varphi}_\infty}{N^{1/2}}
\end{align}
and, applying the tower law again,
\begin{align}
    \bbE{}\left[\left|\ft{\widehat{\bbQ}^{N}}{c}{b}(\varphi) - \ft{\breve{\bbQ}^{N}}{a}{b}(\varphi)\right|^p\right] ^{1/p}
        &\leq \ft{C^p}{a}{c-1}\norm{\bar{\omega}_{c}}_{\infty} \frac{\norm{\varphi}_\infty}{N^{1/2}}.
\end{align}
This ensures that 
\begin{align}
    \bbE{}\left[\left|\ft{\bbQ}{a}{b}(\varphi) - \ft{\breve{\bbQ}^{N}}{a}{b}(\varphi)\right|^p\right]^{1/p}
        & \leq 2 \ft{C^p}{a}{c-1}\norm{\bar{\omega}_{c}}_{\infty} \frac{\norm{\varphi}_\infty}{N^{1/2}}.
\end{align}
Similarly, instead of introducing $\ft{\widehat{\bbQ}^{N}}{c}{b}(\varphi)$,
we could have introduced the similar quantity 

\[\ft{\widehat{\bbQ}^{N}}{a}{c-1}(\varphi) = \frac 1 N \sum_{n=1}^N\int
  \bar{\omega}_c\left(x_{c-1}, X^n_c\right)
  \varphi\left(\ft{x}{a}{c-1}, \ft{X^n}{c}{b}\right)
  \ft{\bbQ}{a}{c-1}(\dd{\ft{x}{a}{c-1}})\]
to obtain:
\begin{align}
    \bbE{}\left[\left|\ft{\bbQ}{a}{b}(\varphi) - \ft{\breve{\bbQ}^{N}}{a}{b}(\varphi)\right|^p\right]^{1/p}
        & \leq 2 \ft{C^p}{c}{b}\norm{\bar{\omega}_{c}}_{\infty} \frac{\norm{\varphi}_\infty}{N^{1/2}}.
\end{align}
This finally ensures that 
\begin{align}
    \bbE{}\left[\left|\ft{\bbQ}{a}{b}(\varphi) - \ft{\breve{\bbQ}^{N}}{a}{b}(\varphi)\right|^p\right]^{1/p}
      & \leq 2 \min(\ft{C^p}{a}{c-1}, \ft{C^p}{c}{b})\norm{\bar{\omega}_{c}}_{\infty} \frac{\norm{\varphi}_\infty}{N^{1/2}}.\label{eq:err}
\end{align}
Now the term $\bbE{}\left[\left|\ft{\breve{\bbQ}^{N}}{a}{b}(\varphi)- \ft{\widetilde{\bbQ}^N}{a}{b}(\varphi)\right|^p\right]^{1/p}$ can be controlled in a way similar to the one used in \cite[][Lemma 11.2]{Chopin2020Book}. Indeed we first note that $\ft{\widetilde{\bbQ}^N}{a}{b}(\varphi) - \ft{\breve{\bbQ}^N}{a}{b}(\varphi) = \ft{\widetilde{\bbQ}^N}{a}{b}(\varphi) \left(1 - \ft{\breve{\bbQ}^N}{a}{b}(1)\right)$, so that 
\begin{align}
  \bbE{}\left[\left|\ft{\breve{\bbQ}^{N}}{a}{b}(\varphi)- \ft{\widetilde{\bbQ}^N}{a}{b}(\varphi)\right|^p\right]^{1/p} \leq \norm{\varphi}_{\infty} \bbE{}\left[\left|1 - \ft{\breve{\bbQ}^N}{a}{b}(1)\right|^p\right]^{1/p}
\end{align}
Moreover, $\ft{\bbQ}{a}{b}(1) = 1$ by definition, so that we can rewrite 
\begin{align}
  \bbE{}\left[\left|1 - \ft{\breve{\bbQ}^N}{a}{b}(1)\right|^p\right]^{1/p}
  &= \bbE{}\left[\left|\ft{\bbQ}{a}{b}(1) - \ft{\breve{\bbQ}^N}{a}{b}(1)\right|^p\right]^{1/p}
\end{align}
which can be bounded similarly to~\eqref{eq:err}, giving
\begin{align}
  \bbE{}\left[\left|1 - \ft{\breve{\bbQ}^N}{a}{b}(1)\right|^p\right]^{1/p} 
  & \leq 2 \min(\ft{C^p}{a}{c-1}, \ft{C^p}{c}{b}) \norm{\bar{\omega}_c}_{\infty} \frac{1}{N^{1/2}}.
\end{align}
This results in the following inequality
\begin{align}
    \bbE{}\left[\left|\ft{\breve{\bbQ}^{N}}{a}{b}(\varphi)- \ft{\widetilde{\bbQ}^N}{a}{b}(\varphi)\right|^p\right]^{1/p}
    & \leq 2 \min(\ft{C^p}{a}{c-1}, \ft{C^p}{c}{b}) \norm{\bar{\omega}_c}_{\infty} \frac{\norm{\varphi}_{\infty}}{N^{1/2}}
\end{align}
Putting everything together, we obtain
\begin{align}
    \bbE{}\left[\left|\ft{\bbQ}{a}{b}(\varphi) - \ft{\bbQ^{N}}{a}{b}(\varphi)\right|^p\right]^{1/p}
        &\leq \left(4 \min(\ft{C^p}{a}{c-1}, \ft{C^p}{c}{b})\norm{\bar{\omega}_c}_{\infty} + 2^{(p+1)/p)}\right)\frac{\norm{\varphi}_{\infty}}{N^{1/2}}.
\end{align}

\end{document}